\title{On the Optimality of Tape Merge of Two Lists with Similar Size}
\date{}
\author[1,2]{Qian Li}
\author[1,2]{Xiaoming Sun}
\author[1,2]{Jialin Zhang}
\affil[1]{CAS Key Lab of Network Data Science and Technology, Institute of Computing Technology, Chinese Academy of Sciences, 100190, Beijing, China.}
\affil[2]{University of Chinese Academy of Sciences, Beijing, 100049, China.\\
  \texttt{\{liqian, sunxiaoming, zhangjialin\}@ict.ac.cn}}
\newtheorem{definition}{Definition}
\newtheorem{theorem}{Theorem}
\newtheorem{lemma}{Lemma}
\newtheorem{conjecture}{Conjecture}
\begin{document}

\maketitle
\begin{abstract}
The problem of merging sorted lists in the least number of pairwise comparisons has been solved completely only for a few special cases. Graham and Karp~\cite{taocp} independently discovered that the tape merge algorithm is optimal in the worst case when the two lists have the same size. In the seminal papers, Stockmeyer and Yao\cite{yao},  Murphy and Paull\cite{3k3},  and Christen\cite{christen1978optimality} independently showed when the lists to be merged are of size $m$ and $n$ satisfying $m\leq n\leq\lfloor\frac{3}{2}m\rfloor+1$, the tape merge algorithm is optimal in the worst case. This paper extends this result by showing that the tape merge algorithm is optimal in the worst case whenever the size of one list is  no larger  than 1.52 times the size of the other.  The main tool we used to prove lower bounds is Knuth's adversary methods~\cite{taocp}. In addition, we show that  the lower bound cannot be improved to 1.8 via  Knuth's adversary methods. We also develop a new inequality about Knuth's adversary methods, which might be interesting in its own right. Moreover, we design a simple procedure to achieve constant improvement of the upper bounds for $2m-2\leq n\leq 3m $.
\end{abstract}

\section{Introduction}

Suppose there are two disjoint linearly ordered lists $A$ and $B$:
$$a_1<a_2<\cdots<a_m$$
and
$$b_1<b_2<\cdots<b_n$$
respectively, where the $m+n$ elements are distinct.
The problem of merging them into one ordered list is one of the most fundamental algorithmic problems which has many practical applications as well as important theoretical significance. This problem has been extensively studied under different models, such as comparison-based model~\cite{taocp}, parallel model~\cite{Gavril}, in-place merging model~\cite{Huang1988} etc.
 In this paper we focus on the classical comparison-based model, where the algorithm is a sequence of pairwise comparisons. People are interested in this model due to two reasons. Firstly, it is independent to the underlying order relation used, no matter it is "$<$" in $\mathbf{R}$ or another abstract order relation.  Secondly, it is unnecessary to access the value of elements in this model. Such a restriction could come from security or privacy concerns where the only operation available is a zero-knowledge pairwise comparison which reveals only the ordering relation between elements.


The main theoretical question in this merge problem is to determine $M(m,n)$, the minimum number of comparisons which is always sufficient to merge the lists\cite{taocp}.
 Given any algorithm $g$\footnote{We only consider deterministic algorithms in this paper.} to solve the $(m,n)$ merging problem (i.e. where $|A|=m$ and $|B|=n$), let $M_{g}(m,n)$ be the number of comparisons required by algorithm $g$ in the worst case, then
$$M(m,n)=\min_{g}M_{g}(m,n).$$
An algorithm $g$ is said to be optimal on $(m,n)$ if $M_g(m,n)=M(m,n)$. By symmetry, it is clear that $M(m,n)=M(n,m)$. To much surprise, this problem seems quite difficult in general, and exact values are known for only a few special cases. Knuth determined the value of $M(m,n)$ for the case $m,n\leq 10$ in his book~\cite{taocp}. Graham~\cite{taocp} and Hwang and Lin~\cite{hwang1971optimal} completely solved the case $m=2$ independently. The case $m=3$ is quite a bit harder and was solved by Hwang~\cite{hwang1980optimal} and Murphy~\cite{Murphyreport}. M{\"o}nting solved the case $m=4$ and also obtained strong results about $m=5$~\cite{5n}. In addition, Smith and Lang~\cite{game} devised a computer program based on game solver techniques such as alpha beta search to compute $M(m,n)$. They uncovered many interesting facts including $M(7,12)=17$, while people used to believe $M(7,12)=18$.

Several different algorithms have been developed for the merge problem, among them \emph{tape merge} or \emph{linear merge} might be the simplest and most commonly used one.
In this algorithm, two smallest elements (initially $a_1$ and $b_1$) are compared, and the smaller one will be deleted from its list and placed on the end of the output list. Then repeat the process until one list is exhausted. It's easy to see that this algorithm requires $m+n-1$ comparisons in the worst case, hence $M(m,n)\leq m+n-1$. However, when $m$ is much smaller than $n$, it is obvious that this algorithm becomes quite inefficient. For example, when $m=1$, the merging problem is equivalent to an insertion problem and the rather different \emph{binary insertion} procedure is optimal, i.e. $M(1,n)=\lceil\lg(n+1)\rceil$.

One nature question is "{\it when is tape merge optimal?}". By symmetry, we can assume $n\geq m$ and define $\alpha (m)$ be the maximum integer $n (\geq m)$ such that tape merge is  optimal,  i.e.
$$\alpha (m)=\max\{n\in \mathbb{N}\ |\ M (m, n)= m+n-1,  n\geq m\}. $$
Assume a conjecture proposed by Knuth~\cite{taocp}, which asserts that $M(m,n+1)\leq M(m,n)+1\leq M(m+1,n)$, for $m\leq n$, is correct, it's easy to see tape merge is optimal if and only if $n\leq\alpha(m)$, and $\alpha(m)$ is monotone increasing.

Graham and Karp~\cite{taocp} independently discovered that $M(m,m)=2m-1$ for $m\geq 1$. Then Knuth~\cite{taocp} proved $\alpha(m)\geq 4$ for $m\leq 6$. Stockmeyer and Yao\cite{yao},  Murphy and Paull\cite{3k3},  and Christen\cite{christen1978optimality} independently significantly improved the lower bounds by showing $\alpha(m)\geq \lfloor\frac{3}{2}m\rfloor+1$, that is $M (m, n)=m+n-1$,  for $m\leq n\leq \lfloor\frac{3}{2}m\rfloor+1$.
On the other hand,  Hwang\cite{simple} showed that $M (m, 2m)\leq 3m-2$, which implies $\alpha(m)\leq 2m-1$. For $m\leq n\leq 2m-1$, the best known merge algorithm is tape merge algorithm. It is conjectured by Fernandez et al. \cite{fernandeztwo} that $\alpha(m)=\frac{1+\sqrt{5}}{2}m\pm o(m)$.

For general $n\geq m$, Hwang and Lin\cite{simple} proposed an in-between algorithm called \emph{binary merge}, which excellently compromised between binary insertion and tape merge in such a way that the best features of both are retained. It reduces to tape merge when $n\leq2m$, and reduces to binary insertion when $m=1$. Let $M_{bm}(m,n)$ be the worst-case complexity of this algorithm. They showed that
    $$M_{bm}(m,n)=m(1+\lfloor\lg\frac{n}{m}\rfloor)+\lfloor\frac{n}{2^{\lfloor\lg\frac{n}{m}\rfloor}}\rfloor-1.$$

Hwang and Deutsch\cite{hwang1973class} designed an algorithm which is optimal over all \emph{insertive algorithms} including binary merge, where for each element of the smaller list, the comparisons involving it are made consecutively. However,  the improvement for fixed $n/m$ over binary merge increases more slowly than linearly in $m$\cite{manacher1979significant}.
 Here we say that algorithm $A_1$ with complexity $M_{A_1}(m,n)$ is significantly faster for some fixed ratio $n/m$ than algorithm $A_2$ with complexity $M_{A_2}(m,n)$, if $M_{A_2}(m,n)-M_{A_1}(m,n)=\Omega(m)$. The first significant improvement over binary merge was proposed by Manacher\cite{manacher1979significant},  which can decrease the number of comparisons by $\frac{31}{336}m$ for $n/m\geq8$, and Thanh and Bui\cite{Animprovement} further improved this number to $\frac{13}{84}m$.  In 1978,  Christen\cite{christen1978improving} proposed an elegant algorithm, called \emph{forward testing and backward insertion},  which is better than binary merge when $n/m\geq3$ and saves at least $\sum_{j=1}^{k}\lfloor\frac{m-1}{4^j}\rfloor$ comparisons over binary merging, for $n\geq 4^km$. Thus it saves about $m/3$ comparisons when $n/m\rightarrow\infty$.  Moreover,  Christen's procedure is optimal for $5m-3\leq n\leq7m$, i.e. $M (m, n)=\lfloor (11m+n-3)/4\rfloor$.

On the lower bound side, there are two main techniques in proving lower bounds. The first one is the information theoretic lower bound $ I(m,n)=\lceil\lg{{m+n}\choose{m}}\rceil$.
 Hwang and Lin\cite{simple} have proved that
 $$I(m,n)\leq M(m,n)\leq M_{bm}(m,n)\leq I(m,n)+m.$$

The second one is called \emph{Knuth's adversary methods}~\cite{taocp}. The idea is that the optimal merge problem can be viewed as a two-player game with perfect information, in which the algorithm chooses the comparisons, while the adversary chooses(consistently) the results of these comparisons. It is easy to observe that $M(m,n)$ is actually the min-max value of this game. Thus a given strategy of the adversary provides a lower bound for $M(m,n)$. Mainly because of the consistency condition on the answers, general  strategies are rather tedious to work with. Knuth proposed the idea of using of "disjunctive" strategies, in which a splitting of the remaining problem into two disjoint problems is provided, in addition to the result of the comparison. With this restricted adversary, he used term $.M.(m,n)$ to represent the minimum number of comparisons required in the algorithm, which is also a lower bound of $M(m,n)$.  The detail will be specified in Section~\ref{section:definition}.

%

\subsection{Our Results}
In this paper, we first improve the lower bounds of $\alpha(m)$ from $\lfloor\frac{3}{2}m\rfloor+1$ to $\lfloor\frac{38}{25}m\rfloor$ by using Knuth's adversary methods.
\begin{theorem}\label{thm:38/25}
$M (m, n)=m+n-1$,  if $m\leq n\leq \frac{38}{25}m$.
\end{theorem}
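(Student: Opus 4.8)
The bound $M(m,n)\le m+n-1$ is just the performance of tape merge, so the whole content of Theorem~\ref{thm:38/25} is the matching lower bound $M(m,n)\ge m+n-1$ for $m\le n\le\frac{38}{25}m$. Since Knuth's disjunctive value satisfies $.M.(m,n)\le M(m,n)$, it suffices to prove $.M.(m,n)\ge m+n-1$ in this range. I would work from the recurrence for $.M.$ (to be set up in Section~\ref{section:definition}), which takes the form: after the algorithm compares $a_i$ with $b_j$, the adversary answers and the instance splits into two independent merging instances, giving
\begin{equation*}
.M.(m,n)=1+\min_{1\le i\le m,\ 1\le j\le n}\ \max\bigl\{\, .M.(i,j-1)+.M.(m-i,\,n-j+1),\ \ .M.(i-1,j)+.M.(m-i+1,\,n-j)\,\bigr\},
\end{equation*}
with $.M.(m,0)=.M.(0,n)=0$ and $.M.(1,n)=\lceil\lg(n+1)\rceil$. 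For a lower bound only one good adversary reply per comparison is needed: for every $(i,j)$ one of the two bracketed sums must reach $m+n-2$.

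The obstruction is that the two sub-instances of a split need not satisfy $\max\le\frac{38}{25}\min$, and outside that regime $.M.$ drops well below (size $+$ size $-1$); already $.M.(1,n)=\lceil\lg(n+1)\rceil$. So the plan is to run the induction against an auxiliary symmetric function $f(m,n)$ with $f(m,0)=f(0,n)=0$, chosen so that $f(m,n)=m+n-1$ exactly on $\{\min\ge 1,\ \max\le\frac{38}{25}\min\}$ and, for larger ratios, $f$ degrades to a value still provably below $.M.$ — for instance a function piecewise linear in $n/m$ that agrees with the information-theoretic bound $\lceil\lg\binom{m+n}{m}\rceil$ and with $\lceil\lg(n+1)\rceil$ near the axes, so the base cases hold. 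The target then becomes $.M.(m,n)\ge f(m,n)$ for all $m,n$, proved by strong induction on $m+n$: for a fixed instance ($m\le n$, say) and every comparison $(i,j)$, I would exhibit the adversary reply for which $1+f(\text{left})+f(\text{right})\ge f(m,n)$, invoking the inductive hypothesis on the two strictly smaller sub-instances.

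The heart is this inductive step, which I expect to unfold as a case analysis on how unbalanced the four candidate sub-instances $(i,j-1)$, $(m-i,n-j+1)$, $(i-1,j)$, $(m-i+1,n-j)$ are. When the two pieces of one reply both stay in the good regime their $f$-values sum to exactly $(i+j-2)+(m-i+n-j)=m+n-2$ and the step is immediate; the real work is the cases where a piece is $a$-heavy or $b$-heavy and only the weaker ($\lg$-type or piecewise-linear) bound on $f$ is available. There I would feed in a supplementary inequality controlling $f$ (equivalently $.M.$) of a combined instance by the $f$-values of its parts — an ``almost superadditivity'' statement for the adversary bound, which is the new inequality mentioned in the abstract; quantifying the worst admissible configuration in this analysis is precisely what fixes the constant at $\frac{38}{25}$ (and, in the negative direction, indicates why the same scheme cannot reach $1.8$).

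I expect two main difficulties. The first is calibrating $f$ outside the good regime: it must be simultaneously a valid self-bootstrapping lower bound for $.M.$ (so the base cases and all ``outside'' instances check out) and large enough that every branch of the inductive step closes for the ``inside'' instances — these two demands pull in opposite directions and leave only a narrow target. The second is the bookkeeping in the unbalanced branches, where the supplementary inequality has to be combined with monotonicity facts such as $.M.(m,n)\le .M.(m',n')$ for $m\le m'$, $n\le n'$, and with difference inequalities of binary-insertion type, across several overlapping ranges of $i$ and $j$. Once $.M.(m,n)\ge f(m,n)=m+n-1$ is secured for $m\le n\le\frac{38}{25}m$, combining with the tape merge upper bound yields $M(m,n)=m+n-1$, and $M(m,n)=M(n,m)$ disposes of the case $m\ge n$.
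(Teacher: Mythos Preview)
Your proposal has a real gap: the recurrence you write for $.M.$ is not Knuth's adversary recurrence. In Knuth's disjunctive adversary, after the comparison $a_i:b_j$ the adversary may (i) choose \emph{any} split point, not only the tight one at $(i,j)$, and (ii) use the ``complex'' splits (cases~2 and~3 in Section~\ref{section:definition}), which produce constrained subproblems of the form $/M.$, $\backslash M.$, $/M\backslash$, etc. Thus the recursion for $.M.$ is not closed: the right-hand side involves the other $\lambda M\rho$ functions, and any inductive argument must treat them in parallel. Your formula with only $.M.$ on both sides defines a strictly weaker adversary, and there is no reason to believe that this weaker adversary already forces $m+n-1$ all the way to $n/m=38/25$; indeed the paper's proof makes essential use of complex splits (see Parts~(c),~(e),~(f) of Theorem~\ref{2538}) and of split points far from $(i,j)$ (see the very first case of Part~(a)). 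The auxiliary-function scheme you sketch is also left unspecified at exactly the point where the argument bites: you acknowledge that ``calibrating $f$'' outside the good ratio is the main difficulty, but give no candidate for $f$ and no mechanism for the ``almost superadditivity'' inequality you would need.

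The paper's route is structurally different and avoids a global auxiliary function altogether. The key lemma (Theorem~\ref{2538}) is a \emph{periodicity} statement: $\lambda M\rho(m+25,n+38)\ge\lambda M\rho(m,n)+63$ for all six variants $\lambda M\rho$ simultaneously, proved by a joint induction in which the adversary either replays the optimal strategy of the smaller instance or, in a handful of boundary configurations, uses an explicit complex split; the base cases $m,n\le50$ are computer-verified. Theorem~\ref{thm:38/25} then follows in two lines from this step-63 inequality together with the monotonicity $\overline{.M.}(m+1,n)\le\overline{.M.}(m,n)\le\overline{.M.}(m,n+1)$ of Lemma~\ref{25384}: write $m=25p+s$, $n=38q+t$ and peel off $(25,38)$-blocks until you land in the verified range. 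So the constant $38/25$ is not produced by optimizing a piecewise-linear $f$; it is dictated by the largest pair $(a,b)$ for which the finite table confirms $.M.(a,b)=a+b-1$ and the step inequality $\lambda M\rho(m+a,n+b)\ge\lambda M\rho(m,n)+a+b$ can be pushed through for all variants.
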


We then show limitations of Knuth's adversary methods.
\begin{theorem}\label{thm:9/5}
$. M.  (m, n)<m+n-1$, if $n\geq9\lceil m/5\rceil$.
\end{theorem}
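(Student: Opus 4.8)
We want to show $.M.(m,n) < m+n-1$ whenever $n \ge 9\lceil m/5 \rceil$. Since $.M.$ is the value of the *restricted* (disjunctive) adversary game, this is a statement that no disjunctive adversary strategy can force $m+n-1$ comparisons once $n$ is at least roughly $\tfrac{9}{5}m$; equivalently, the algorithm has a strategy in the disjunctive game that finishes in at most $m+n-2$ comparisons. So unlike Theorem 1, here we are on the *algorithm's* side of Knuth's game, and we must exhibit (and analyze) a concrete decision procedure that is legal against every disjunctive adversary.

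**Plan.** The plan is to argue by induction on $m$, reducing the $(m,n)$ problem to smaller instances for which $.M.$ is already known or inductively bounded. The natural base is $m \le 5$: there $9\lceil m/5\rceil = 9$, and one checks directly (this is essentially the classical computation, e.g. Knuth's tables, that $.M.(m,9) \le m+7$ for $m\le 5$, with the binary-insertion-type procedure for $m=1$ and small-case analysis for $m=2,3,4,5$) that the bound holds; these small cases also anchor the recursion. For the inductive step, take $m \ge 6$ and $n \ge 9\lceil m/5\rceil$. I would make the first comparison $a_i : b_j$ with $i,j$ chosen so that *whichever way the disjunctive adversary splits*, each resulting subproblem $(m',n')$ either again satisfies $n' \ge 9\lceil m'/5\rceil$ (so induction applies and saves a comparison there) or is small enough that tape-merge-style bounds already give a saving. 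Concretely: comparing a low element of $A$ against an element $b_j$ with $j$ on the order of $n/5$-ish splits off, on the "$a_i < b_j$" side, a problem with the $B$-list cut down by $\sim n/5$ while $A$ is nearly intact — one wants the ratio there to stay above $9/5$ — and on the "$a_i > b_j$" side, a problem where $A$ loses a couple of elements and $B$ loses $j$; choosing $i,j$ to balance these (this is where the $5$'s and $9$'s come from — $9/5$ is the threshold being probed, and $\lceil m/5\rceil$ is the granularity that makes the arithmetic of the split close up) lets both branches inherit a $\ge m'+n'-2$ bound. The total is then $1 + (m'_1 + n'_1 - 2)$ or $1 + (m'_2 + n'_2 - 2)$, and because in a disjunctive move $m'_1 + m'_2 \le m$, $n'_1 + n'_2 \le n$ (roughly — one element is "used up" by the comparison), this comes out to at most $m+n-2$.

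**Main obstacle.** The hard part will be engineering the single comparison $a_i : b_j$ so that the saving survives *both* branches simultaneously — a disjunctive adversary gets to pick the split of the opposite list, so one cannot simply assume the convenient branch. One must pin down the exact $i = i(m)$ and $j = j(m)$ (likely something like $i$ a small constant or $\Theta(1)$ and $j = \lceil 4m/5\rceil$ or similar, with case analysis modulo $5$), verify that in the "larger-$B$-survives" branch the residual ratio is still $\ge 9/5$ up to the correct rounding, and in the other branch that the residual instance is either covered by induction or by the trivial $m'+n'-1 \le$ (needed) bound — and then check the boundary residues where $m' \in \{5,6\}$ by hand. A secondary subtlety is keeping the bookkeeping of "which list the $\pm 1$'s land in" consistent with Knuth's precise definition of a disjunctive step (the comparison result plus the forced disjoint splitting), so that the claimed recursion $.M.(m,n) \le 1 + \min$ over adversary responses is actually what the game gives; getting that accounting off by one would break the telescoping. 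Once the split is correctly calibrated the induction is routine, so essentially all the work is in choosing the split and discharging the $O(1)$ boundary cases.
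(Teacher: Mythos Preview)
Your proposal has the correct high-level orientation --- to bound $.M.(m,n)$ from above you must play the \emph{algorithm's} side of Knuth's game and exhibit a first comparison that defeats every disjunctive adversary reply --- but the plan as written contains two genuine gaps.

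First, the accounting of a disjunctive step is off. After the algorithm queries $a_i:b_j$, the adversary names a split into \emph{two} subproblems $(A_1,B_1)$ and $(A_2,B_2)$, and the cost is
\[
1 + \lambda M\rho(|A_1|,|B_1|) + \lambda' M\rho'(|A_2|,|B_2|);
\]
both subproblems must be solved, and the adversary (not the algorithm) chooses the cut points, subject only to $a_i$ and $b_j$ landing in different halves. Your sentences ``on the `$a_i<b_j$' side, a problem with the $B$-list cut down by $\sim n/5$ while $A$ is nearly intact'' and ``the total is then $1+(m'_1+n'_1-2)$ \emph{or} $1+(m'_2+n'_2-2)$'' describe a single residual problem per branch, as in an ordinary decision tree; that is not the $.M.$ game. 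The correct target, for the simple strategy, is
\[
\overline{.M.}(p,q)+\overline{.M.}(m-p,n-q)\ \ge\ 1
\]
for \emph{every} legal $(p,q)$ the adversary may pick, with further variants for the two complex strategies.

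Second, and more seriously, the inductive hypothesis ``$\overline{.M.}\ge 1$ once $n\ge 9\lceil m/5\rceil$'' is too weak to close. When $n$ is exactly $9\lceil m/5\rceil$ the adversary can split so that \emph{both} pieces sit strictly below the threshold: for instance with $m=5k$, $n=9k$, take $p\not\equiv 0\pmod 5$ and $q$ in the window $9\lfloor p/5\rfloor < q < 9\lceil p/5\rceil$; then neither $(p,q)$ nor $(m-p,n-q)$ satisfies $n'\ge 9\lceil m'/5\rceil$, and your induction yields nothing on either side. The paper repairs this by proving the strictly stronger parametric bound
\[
.M.(5k,\,9k+12t)\ \le\ 14k+11t-2,\qquad\text{i.e.}\quad \overline{.M.}(5k,9k+12t)\ \ge\ t+1,
\]
so that the saving grows linearly in the excess $t$; when the adversary shifts ratio from one piece to the other, the total saving is conserved. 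The induction is then anchored by the fixed first comparison $a_{50}:b_{79}$ (hence base cases run up to $k=10$, i.e.\ $m\le 50$, checked by table and by binary merge), and every adversary response --- simple or complex, with free offsets $x,y\ge 0$ --- is discharged via the monotonicity Lemma~\ref{25384} together with the induction hypothesis at $(k-10,\,t+1)$. Your $m\le 5$ base and unspecified $(i,j)$ cannot support this; the choice $(50,79)$ is exactly what makes the excess jump by $12$ (one unit of $t$) on the surviving side.
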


This means that by using Knuth's adversary methods, it's impossible to show $\alpha(m)\geq9\lceil m/5\rceil\approx\frac{9}{5}m$ for any $m$.

When $m\leq n\leq 3m$, binary merge is the best known algorithm, which reduces to tape merge for $n\in[m,2m]$ and gives $M_{bm}(m,2m+k)=3m+\lfloor k/2\rfloor-1$ for $k\in[0,m]$.
In this paper, we give improved upper bounds for $M (m, n)$ for $2m-2\leq n\leq 3m$. In particular, it also improves the upper bounds of $\alpha(m)$, that is, $\alpha(m)\leq 2m-3$ for $m\geq 7$.
\begin{theorem}\label{thm:upper_bound}

\begin{enumerate}
  \item[(a)] $M (m, 2m+k)\leq 3m+\lfloor k/2\rfloor-2 = M_{bm}(m,2m+k)-1$,  if $m\geq 5$ and $k\geq -1$.
  \item[(b)] $M (m, 2m-2)\leq 3m-4 = M_{bm}(m,2m-2)-1$,  if $m\geq 7$. That is $\alpha(m)\leq 2m-3$ for $m\geq 7$.
  \item[(c)] $M (m, 2m)\leq 3m-3 = M_{bm}(m,2m)-2$,  if $m\geq 10$.
\end{enumerate}
\end{theorem}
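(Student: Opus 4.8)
The plan is to build explicit merging algorithms that beat binary merge by a constant on the range $2m-2 \le n \le 3m$, and to analyze their worst-case cost carefully. The basic idea is that binary merge, when $n \approx 2m$, essentially does a "galloping" step of size $2$: for each element $a_i$ of the small list it compares $a_i$ against $b_{j+2}$ (the element two positions ahead in $B$), and if $a_i < b_{j+2}$ it finishes placing $a_i$ with one more comparison, otherwise it advances $j$ by $2$. This costs $2$ comparisons per "block" of two $B$-elements skipped plus overhead. The slack we want to exploit is the classical observation (used already by Hwang and by Christen) that when two consecutive small-list elements $a_i, a_{i+1}$ both fall into the same size-$2$ block $\{b_{j+1}, b_{j+2}\}$, or more generally when the interleaving pattern is locally unbalanced, one comparison can be saved; and conversely when they spread out, the earlier comparison $a_i : b_{j+2}$ already gives partial information about where $a_{i+1}$ sits. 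So first I would describe a procedure that processes the small list in order but, after each insertion, remembers the outcome of the "lookahead" comparison and reuses it, together with a small case analysis on whether the next small element lands in the just-examined block or past it. I expect part~(a) to follow from a clean amortized argument: every two comparisons either place one $a_i$ or skip two $b$'s, and an explicit accounting shows the total is at most $3m + \lfloor k/2\rfloor - 2$ once $m \ge 5$, the $-2$ (versus $M_{bm} = 3m+\lfloor k/2\rfloor -1$) coming from a single saved comparison at the boundary when one list is nearly exhausted (e.g. the last $a_i$ or the last block needs no full test).

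For part~(b), the target $M(m,2m-2) \le 3m-4$ is exactly one below $M_{bm}(m,2m-2) = 3m-3$, and since $2m-2 \le \lfloor 3m/2 \rfloor + 1$ fails for $m \ge 7$ we are genuinely outside the previously-settled tape-merge regime — indeed this is the part that yields the new upper bound $\alpha(m) \le 2m-3$. Here $n = 2m-2 < 2m$, so binary merge has already degenerated to tape merge ($2m + n - 1 = 3m-3$ comparisons), and the claim is that plain tape merge is *not* optimal: we can save one comparison. I would use the same lookahead-and-reuse device, but now note that with $n < 2m$ there must be at least two indices $i$ where $a_i$ and $a_{i+1}$ are "adjacent" in the merged order (a pigeonhole/counting argument on the gaps, since $B$ only has $2m-2$ elements to separate the $m$ elements of $A$), and a single such adjacency already lets the algorithm skip one comparison relative to tape merge. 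The delicate point is making the algorithm *adaptive* so that it provably detects and uses at least one such adjacency in the worst case; I would phrase this as: run tape merge but, the first time the algorithm is about to compare $a_{i+1}$ against the same $b_j$ it just compared $a_i$ against and lost, it instead does something smarter — and argue by a potential/counting argument that this event is forced.

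For part~(c) the improvement is by $2$, which I would get by iterating the part-(b) trick: with $n = 2m$ exactly, there are enough "tight" spots in any adversary arrangement to harvest two independent saved comparisons, and for $m \ge 10$ the book-keeping to keep them disjoint goes through. Concretely I would partition the merge into a prefix and a suffix phase, apply the one-comparison saving in each, and check the constants. The main obstacle throughout is the worst-case (adversarial) analysis rather than the algorithm design: it is easy to save comparisons on *favorable* inputs, but one must show the savings survive against an adversary choosing the comparison outcomes, which forces the algorithm to be adaptive and forces the analysis to be a genuine amortized/potential argument rather than a static count. The secondary obstacle is handling small $m$: the theorems all carry hypotheses $m \ge 5, 7, 10$, and I expect these come from the boundary terms in the amortized count (the $-2$ and the iterated $-2$ only pay off once $m$ is large enough that the linear savings dominate the $O(1)$ overhead of the prefix/suffix phases), so I would isolate the boundary behavior into a short lemma and verify the threshold by a direct small-case check.
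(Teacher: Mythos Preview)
Your approach has a genuine gap, most visibly in part~(b). The pigeonhole claim is false: with $n=2m-2$ the adversary can interleave as
\[
a_1,\,b_1,\,b_2,\,a_2,\,b_3,\,b_4,\,\dots,\,a_{m-1},\,b_{2m-3},\,b_{2m-2},\,a_m,
\]
so that no two $a_i,a_{i+1}$ are adjacent in the merged order. More generally, separating $m$ elements of $A$ pairwise needs only $m-1$ separators, and $2m-2\ge m-1$; thus the adversary is never forced into an adjacency, and an adaptive tape merge that waits for ``$a_{i+1}$ about to be compared against the same $b_j$ that beat $a_i$'' will simply never see a situation it can exploit. Part~(c), which you derive by iterating~(b), collapses with it. Part~(a)'s amortized sketch is too vague to evaluate, but note that the single saved comparison is \emph{not} a boundary effect that appears once ``$m$ is large enough''; in the paper's argument it arises already from the very first block of comparisons.

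The paper's route is different and concrete. It fixes one explicit three-comparison prefix procedure (compare $a_1{:}b_2$; if $a_1<b_2$ compare $a_2{:}b_2$; if $a_2<b_2$ compare $a_2{:}b_1$), which yields the recurrence
\[
M(m,n)\le\max\bigl\{M(m,n-2)+1,\;M(m-1,n-2)+3,\;M(m-2,n)+3,\;M(m-2,n-1)+4\bigr\},
\]
and then proves each of (a), (b), (c) by a straightforward double induction on $(m,k)$, feeding in binary merge bounds and the earlier parts for three of the four branches. The thresholds $m\ge 5,7,10$ are \emph{not} amortization overhead: they are the smallest values at which the required base cases are known, coming from the exact solutions for $m=3,5$ due to Hwang--Murphy and M{\"o}nting and from the computer-verified values $M(7,12)=17$ and $M(10,20)=27$ of Smith and Lang. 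So the missing idea in your proposal is this prefix-procedure recurrence; once you have it, the inductions are mechanical, and no adjacency or potential argument is needed.
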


\subsection{Related work}\label{section:related work}

%

%
%


Besides the worst-case complexity, the average-case complexity  has also been investigated for merge problems. Tanner~\cite{Tanner} designed an algorithm which uses at most $1.06I(m,n)$ comparisons on average. The average case complexity of insertive merging algorithms as well as binary merge has also been investigated~\cite{FernandezdelaVega1998, fernandeztwo}.

Bui et al.~\cite{MAITHANH1986341} gave the optimal randomized algorithms for $(2,n)$ and $(3,n)$ merge problems and discovered that the optimal expected value differs from the optimal worst-case value by at most 1. Fernandez et al.~\cite{fernandeztwo} designed a randomized merging algorithm which performs well for any ratio $n/m$ and is significantly faster than binary merge for $n/m>(\sqrt{5}+1)/2\approx1.618$. More preciously, they showed that
\begin{equation}
         M_{F}(m,n)=\left\{
         \begin{aligned}
             & sn+(1+s)m,& &if& &1+s\leq n/m\leq2+s,&\\
             & 2\sqrt{mn},& &if& &2+s\leq n/m\leq 2r,&
         \end{aligned}
         \right.
     \end{equation}
     Where $s=(\sqrt{5}-1)/2\approx 0.618$ and $r=(\sqrt{2}-1+\sqrt{2}s)^2\approx1.659$.



    Nathan Linial~\cite{Linial} studied a more general problem where partial order relations are already known. He showed the information-theoretic lower bound is good, that is, an algorithms exists which merges $A$ and $B$ in no more than $(\lg(\sqrt{5}+1)/2)^{-1}\lg N_0$ comparisons, where $N_0$ is the number of extensions of the partial order on $A\cup B$. They also pointed out that this bound is tight, and the computation needed for finding the appropriate queries can be done in time polynomial in $m+n$.


     Sorting, merging, selecting and searching are always closely related to each other. Manacher et al.~\cite{Manacher2} used efficient merge algorithms to improve Ford-Johnson sorting algorithm, which was conjectured to be optimal for almost twenty years.  Linial and Saks~\cite{saks} observed that $M(m,n)$ is equivalent to the minimum number of pairwise comparisons to search an entry in a $m\times n$ matrix in which distinct entries are known to be increasingly ordered along rows and columns.  They also studied the generalized problem in monotone multi-dimensional arrays, and their result was further improved by Cheng et al.~\cite{Cheng08}. Ajtai et al.~\cite{Ajtai} considered the problem of sorting and selection with imprecise comparisons. The non-uniform cost model has also been investigated~\cite{Survey,Gupta:2001,Kannan,Zhiyi}, for example, Huang et al.~\cite{Zhiyi} studied the sorting problem where only a subset of all possible pairwise comparisons is allowed.
     For practical use, Brown and Tarjan~\cite{Tarjan} gave a merging procedure which runs in $O(I(m,n))$ time on a real computer.


\vspace{1em}
\textbf{Organization.  }
We introduce some notations and explain Knuth's adversary methods in Section~\ref{section:definition}. In Section~\ref{section:general_ineq} we provide some properties of Knuth's adversary methods  which will be used. In Section~\ref{section:lowerbound} we improve the lower bounds for $\alpha(m)$ via Knuth's adversary methods. Then we show limitations of this method in Section~\ref{section:upperbound2}. Section~\ref{section:upper_bound1} improves the upper bounds for $M(m,2m+k)$.
We conclude the paper  with some open problems in Section~\ref{section:conclusion}.

\section{Preliminaries}\label{section:definition}

In this section, we introduce Knuth's adversary methods and present some notations.

We use the notations $\lambda M\rho$ proposed by Knuth in this paper. The detailed definitions can be found in Knuth's comprehensive monograph~\cite{taocp}. Yao et al.~\cite{yao} gave an example to illustrate the use of that. Here, we briefly introduce the idea.

The basic idea of Knuth's  adversary methods is to restrict the possible adversary strategies. In general, the adversary can arbitrarily answer the comparison query from the algorithm as long as there are no contradictions in his answer. But in Knuth's adversary methods, after each comparison query between $a_i$ and $b_j$, the adversary is required to split each sorted list into two parts $A = A_1\cup A_2$ and $B = B_1 \cup B_2$ (Figure 1). The adversary guarantees that each element in $A_1$ or $B_1$ is smaller than any element in $A_2$ or $B_2$. It is also guaranteed that $a_i$ and $b_j$ are not in the same subproblem, i.e. neither $a_i\in A_1, b_j \in B_1$ nor $a_i\in A_2, b_j \in B_2$, thus, the comparison result between $a_i$ and $b_j$ is determined after the splitting.  Then the merge problem will be reduced to two subproblems $(A_1, B_1)$ and $(A_2, B_2)$ with different left or right constraints. For example, in case 2, the constraint for subproblem $(A_1, B_1)$ is a right constraint $b_l<a_k$ since $a_k\in A_2$ while $b_l\in B_1$.
\begin{figure}[!ht]
  \centering
      \begin{subfigure}[b]{0.3\textwidth}
        \includegraphics[width=\textwidth]{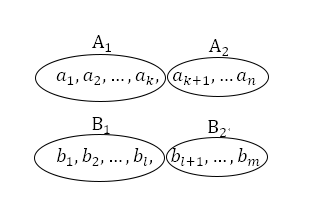}
        \caption{ Case 1}
      \end{subfigure}
       \begin{subfigure}[b]{0.3\textwidth}
        \includegraphics[width=\textwidth]{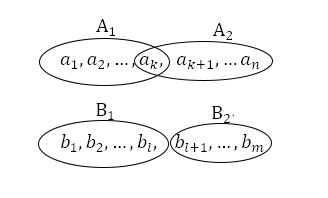}
        \caption{ Case 2}
      \end{subfigure}
       \begin{subfigure}[b]{0.3\textwidth}
        \includegraphics[width=\textwidth]{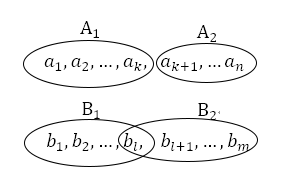}
        \caption{ Case 3}
      \end{subfigure}
  \caption{The Adversary's Splitting Strategies}
\end{figure}

Knuth introduced notation $\lambda M\rho$ to represent nine kinds of restrict adversaries, where $\lambda,\rho\in \{.,\backslash,/\}$ are the left and right constraint. In general, the constraint notation $'.'$  means no left (or right) constraint. Left constraint $\lambda=\backslash $ or $ /$ means that outcomes must be consistent with $a_{1}<b_{1}$ or $a_{1}>b_{1}$ respectively.  Similarly, right constraint $\rho=\backslash (/)$ means the outcomes must be consistent with $a_{m}<b_{n}$ (or $a_{m}>b_{n}$ respectively). Thus, merge problem $\lambda M\rho(A,B)$ will reduce to subproblem $\lambda M.(A_1, B_1)$ and $.M\rho(A_2, B_2)$ in case 1, to subproblem $\lambda M/(A_1, B_1)$ and $\backslash M\rho(A_2, B_2)$ in case 2, and to subproblem $\lambda M\backslash(A_1, B_1)$ and $/M\rho(A_2, B_2)$ in case 3. For convenience, we say the adversary adopts a \emph{simple} strategy if he splits the lists in the way of case 1, otherwise the adversary adopts a \emph{complex} strategy (case 2 or 3).
 There are obvious symmetries, such as $/M.  (m, n)=.M\backslash (m,n)=\backslash M.  (n, m)= .M/  (n, m)$, $/M/  (m, n)= \backslash M\backslash  (m, n)$, and $/M\backslash(m,n)=\backslash M/(n,m)$, which means we can deduce the nine functions to four functions: $.M.$, $/M.$, $/M\backslash$, and $/M/$. These functions can be calculated by a computer rather quickly, and the values for all $m,n\leq 150$ and the program are available in~\cite{table}.

Note $M(m,n)\geq.M.(m,n)$, but $M(m,n)$ is not equal to $.M(m,n).$ in general, since we restrict the power of adversary in the decision tree model by assuming there is a (unknown) division of the lists after each comparison. But this restrict model still covers many interesting cases.
For example, when $m\leq n\leq \lfloor\frac{3}{2}m\rfloor+1$~\cite{christen1978optimality,3k3,yao} or $5m-3\leq n\leq 7m$~\cite{christen1978improving}, $.M(m,n). = M(m,n)$.

Let $\lambda M_{i, j}\rho (m, n)$ denote the number of comparisons resulted from adversary's best strategy if the first comparison is $a_{i}$ and $b_{j}$, thus $\lambda M\rho (m, n)=min_{i, j}\lambda M_{i, j}\rho (m, n)$.

The following notation is also used in our paper.
\begin{definition}
Let $\overline{.M.}(m,n)$ be the difference of the number of comparisons required by tape merge in the worst case and $.M.(m,n)$, i.e.
$\overline{.M.}(m,n)\triangleq m+n-1-.M.(m,n).$
\end{definition}

\section{ Inequalities about Knuth's adversary methods}\label{section:general_ineq}

In this section, we list several inequalities about $\lambda M\rho$, which will be used in Section~\ref{section:lowerbound} and Section~\ref{section:upperbound2}.

\begin{lemma}\label{prop:yao}
For any $\lambda,\rho\in\{.,/,\backslash\}$, we have
\begin{enumerate}
  \item[(a)] $.M\rho(m,n)\geq \lambda M\rho(m,n)$;
  \item[(b)] $/M\rho (m, n)\leq. M\rho (m, n-1)+1$.
\end{enumerate}

\end{lemma}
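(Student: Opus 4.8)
The plan is to read both parts straight off the game description and the recursion $\lambda M\rho(m,n)=\min_{i,j}\lambda M_{i,j}\rho(m,n)$ from Section~\ref{section:definition}.

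For part (a) I would use that dropping the left constraint can only help the adversary. Any \emph{disjunctive} adversary strategy that is legal for $\lambda M\rho$ — all of whose induced outcomes are consistent with the left constraint $\lambda$ and the right constraint $\rho$ — is a fortiori legal for $.M\rho$, whose only requirement is consistency with $\rho$. Hence, for a fixed algorithm, the worst case over legal adversaries cannot decrease when we pass from $(\lambda,\rho)$ to $(.,\rho)$, and minimizing over algorithms gives $.M\rho(m,n)\ge\lambda M\rho(m,n)$; the case $\lambda=.$ is vacuous. If one prefers to stay inside the recursion, the same follows by induction on $m+n$: for each first comparison $(i,j)$ every split available to the $\lambda M_{i,j}\rho$ adversary is also available to the $.M_{i,j}\rho$ adversary, the left subproblem it produces differs only in carrying left constraint $.$ rather than $\lambda$ — and by the inductive hypothesis, applied to a strictly smaller instance, this does not decrease its value — while the right subproblem is identical; then minimize over $(i,j)$.

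For part (b) the key point is that $\lambda=/$ forces $b_1<a_1$, so $b_1$ is the minimum of all $m+n$ elements. I would therefore estimate $/M\rho(m,n)\le /M_{1,1}\rho(m,n)$, i.e. let the algorithm's first comparison be $a_1$ versus $b_1$. Its outcome is forced to $b_1<a_1$; in any legal split $a_1$ and $b_1$ land in different subproblems, and since $a_1=\min A$, $b_1=\min B$, and everything in the lower subproblem precedes everything in the upper one, the only possibility consistent with $b_1<a_1$ is that $b_1$ goes into the lower subproblem while $A_1=\emptyset$. Thus the lower subproblem $(\emptyset,B_1)$ is trivial (zero comparisons), $|B_1|\ge 1$, and the upper subproblem is $\lambda' M\rho(m,n-|B_1|)$ for some $\lambda'\in\{.,/,\backslash\}$ determined only by which of the three splitting cases was used. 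By part (a) its value is at most $.M\rho(m,n-|B_1|)$, and by the standard monotonicity of $.M\rho$ in its second argument this is at most $.M\rho(m,n-1)$. Since the first comparison costs one query, $/M_{1,1}\rho(m,n)\le 1+.M\rho(m,n-1)$, which is the claim; the degenerate cases $m=0$ and $n\le 1$ (where the comparison or the constraint is vacuous) are checked directly, all relevant quantities being $0$ or $1$.

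The one step that needs care is the ``forced split'' in part (b): one has to verify, uniformly over the three splitting shapes of Figure~1, that comparing the two minima $a_1$ and $b_1$ under the hypothesis $a_1>b_1$ really leaves the adversary no option but to peel $b_1$ off by itself, so that the residue is genuinely an $(m,\le n-1)$ instance. Everything else is bookkeeping — tracking which of $.,/,\backslash$ each subproblem inherits so that part (a) applies, and invoking monotonicity of $.M\rho$ in $n$. I do not anticipate a real obstacle; the only thing worth noting is that this argument cannot do better than $+1$, because the adversary may legally take $|B_1|=1$ and so make the residual instance as large as $(m,n-1)$.
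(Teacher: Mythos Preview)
Your proof takes the same approach as the paper's: part~(a) by the obvious observation that dropping a constraint can only help the adversary, and part~(b) by having the algorithm play $a_1$ against $b_1$ and reading off the forced reduction. Your version is more careful about the adversary's alternative splits in~(b), at the cost of invoking a monotonicity of $.M\rho$ in its second argument that the paper never states or proves in this generality---though the paper's one-line assertion $/M_{1,1}\rho(m,n)=1+.M\rho(m,n-1)$ is tacitly relying on the same fact (that the adversary gains nothing by peeling off more than $b_1$).
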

\proof
\emph{Part (a)} is obvious, the adversary can perform at least as well on less restrictions. In \emph{Part (b)}, if the  first comparison is $a_{1}$ and $b_{1}$ for $/M\rho (m, n)$,  the adversary has to claim $a_{1}>b_{1}$, thus it reduces to $. M\rho (m, n-1)$. Therefore  we have $/M\rho (m, n)\leq/M_{1, 1}\rho(m,n)=1+. M\rho (m, n-1)$.
\endproof

In the following, we will show that if $.M.(m,n)=m+n-1$, then tape merge is optimal for any $(m',n')$ satisfying $m'\geq m$, $n'\leq n$ and $m'\leq n'$. That is
 \begin{lemma}\label{25384}
 For any $m,n\geq0$, $m+n\geq1$ and $m\leq n$, we have $. M.  (m+1, n)\geq. M.  (m, n)+1\geq. M.  (m, n+1)$ or
 $\overline{.M.}(m+1,n)\leq \overline{.M.}(m,n)\leq \overline{.M.}(m, n+1)$.
 \end{lemma}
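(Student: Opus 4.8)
The plan is to prove the two inequalities $.M.(m+1,n)\geq .M.(m,n)+1$ and $.M.(m,n)+1\geq .M.(m,n+1)$ separately, each by exhibiting an adversary strategy for the larger problem that mimics an optimal adversary strategy for the smaller one. In both cases the natural idea is: take the adversary's optimal disjunctive strategy for the smaller instance, and "inflate" one of the two lists by a single extra element that the adversary places entirely on one side of every split, so that the extra element never genuinely participates in a comparison — whenever the algorithm queries it, the adversary has already committed its position.

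First I would handle $.M.(m,n+1)\leq .M.(m,n)+1$, which is the easier direction and is really just the statement that adding one element to $B$ costs the algorithm at most one extra comparison even against the restricted adversary; indeed one can reduce to an algorithmic upper bound: run (conceptually) the comparison $a_m$ vs.\ $b_{n+1}$ first, forcing the adversary — who must produce a split with $a_m$ and $b_{n+1}$ separated — into a configuration from which the problem collapses to $.M.(m,n)$ plus the trivial right part, analogous to Lemma~\ref{prop:yao}(b). More carefully, I would argue directly on the adversary's game tree: given any adversary strategy tree for $\lambda M\rho(m,n+1)$, one can prune the element $b_{n+1}$ from every node (it sits at the top of the last subproblem throughout), producing a valid strategy tree for the $(m,n)$ problem whose depth is at least that of the original minus one. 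The bookkeeping is to check that splits remain legal after deletion and that no comparison is lost except possibly the ones directly involving $b_{n+1}$.

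For the harder direction $.M.(m+1,n)\geq .M.(m,n)+1$: let the adversary fix an optimal disjunctive strategy $\sigma$ for $.M.(m,n)$ achieving $m+n-1$ comparisons. For the $(m+1,n)$ problem, I would have the adversary pretend the list $A$ has elements $a_1<\cdots<a_{m+1}$ but treat $a_{m+1}$ as "infinitely large": in every split $A=A_1\cup A_2$, $B=B_1\cup B_2$ dictated by $\sigma$ (run on the $m$ "real" elements $a_1,\dots,a_m$ together with all of $B$), append $a_{m+1}$ to $A_2$ (the upper part). If the algorithm's query is between some $a_i$ ($i\le m$) and some $b_j$, the adversary answers as $\sigma$ would. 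If the query involves $a_{m+1}$, then because $a_{m+1}$ has been placed in the top part of every split so far, the answer ($a_{m+1}>b_j$, or $a_{m+1}>a_i$) is already forced and costs the adversary nothing structurally — but it still costs the algorithm one comparison. Since the algorithm must (in the end) also sort $a_{m+1}$ relative to the $b_j$'s, it is forced to spend at least one comparison it would not have spent in the $(m,n)$ game, on top of the $m+n-1$ comparisons $\sigma$ extracts; this yields $.M.(m+1,n)\ge (m+n-1)+1 = (m+1)+n-1$, and combined with the trivial upper bound $.M.(m+1,n)\le (m+1)+n-1$ (tape merge), equality and hence $\overline{.M.}(m+1,n)\le\overline{.M.}(m,n)$ follow. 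One should also phrase everything in terms of $\overline{.M.}$ to make the monotonicity in the statement transparent, and check the boundary cases $m=0$ or $n=0$ (with $m+n\ge1$) separately, where the claim is immediate.

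The main obstacle I anticipate is making the "forced extra comparison" argument rigorous within the disjunctive-adversary formalism rather than the plain adversary model. In the plain model one can appeal to an information/decision-tree leaf-counting argument; here the adversary is constrained to hand out a split after every comparison, so I must verify that the inflated strategy is itself a legal disjunctive strategy at every node — in particular that putting $a_{m+1}$ always in the upper part is consistent with the requirement that the queried pair lands in different subproblems, and that a query directly against $a_{m+1}$ can be serviced by a legal split (e.g. $A_1 = $ everything real, $A_2=\{a_{m+1}\}$, $B$ split at the appropriate point) rather than being "free." The careful point is that such a split, while not increasing the residual complexity much, still forces the recursion to continue, so depths add correctly; I would isolate this in a short sub-lemma about appending a maximal element to one list, and the rest is routine induction on $m+n$.
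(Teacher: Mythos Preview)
Your approach to the hard direction $.M.(m+1,n)\geq .M.(m,n)+1$ has a genuine gap. The plan is to mimic an optimal disjunctive adversary $\sigma$ for $(m,n)$ and always park the extra element $a_{m+1}$ in the upper subproblem. But after a split, the upper subproblem $(s,t)$ produced by $\sigma$ need not satisfy $s\le t$, and when $s\ge t$ the inequality $\lambda M.(s+1,t)\ge \lambda M.(s,t)+1$ is \emph{false} in general; for instance $.M.(3,1)=.M.(2,1)=2$. So the induction breaks precisely at the step you flag as ``routine.'' (Note also that $a_{m+1}$, being the largest $A$-element, cannot be placed in the \emph{lower} subproblem, so there is no freedom to dodge this.) A secondary issue: you write that $\sigma$ ``achieves $m+n-1$ comparisons,'' which presumes $\overline{.M.}(m,n)=0$; the lemma is stated for arbitrary $(m,n)$ with $m\le n$.

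The paper avoids this obstruction by a different decomposition. It first proves a \emph{diagonal} increment (Lemma~\ref{25383}): $.M.(m+1,n+1)\ge .M.(m,n)+2$. Here the adversary appends one element to \emph{each} list in the upper subproblem, so the recursive call is $\lambda M.(s+1,t+1)\ge \lambda M.(s,t)+2$, which holds regardless of whether $s\le t$ or $s\ge t$ (and is available for $\lambda=/$ via the companion Lemma~\ref{thm:25381}(a)). The two one-sided inequalities are then obtained by a joint induction: for $m<n$, combine Lemma~\ref{25383} with the already-established easy direction at a smaller index,
\[
.M.(m+1,n)\ \ge\ .M.(m,n-1)+2\ \ge\ .M.(m,n)+1.
\]
Your easy direction is essentially right in spirit (first compare $a_m$ with $b_{n+1}$), but note that bounding the adversary's best response to that comparison already uses the hard direction at a smaller size, since one needs $.M.(m-1,n+1)\le .M.(m,n)$ to conclude $\max\{.M.(m-1,n+1),.M.(m,n)\}=.M.(m,n)$. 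So the two parts are genuinely intertwined, and the diagonal lemma is the key missing idea.
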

In order to prove this lemma, We show the following lemma first:
\begin{lemma}\label{25383}
         $. M.  (m+1, n+1)\geq. M.  (m, n)+2$, for any $m,n\geq 0$ and $m+n\geq 1$.
     \end{lemma}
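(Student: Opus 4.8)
The plan is to prove Lemma~\ref{25383} by induction on $m+n$, in each step analysing the algorithm's \emph{first} comparison, between some $a_i$ and $b_j$, in the $(m+1,n+1)$-instance, and exhibiting for every choice of $(i,j)$ a splitting strategy of the adversary that forces at least $.M.(m,n)+2$ comparisons. Two auxiliary facts come first. (1) Weak monotonicity: $.M.(m',n')\le .M.(m,n)$ whenever $m'\le m$ and $n'\le n$ — given an adversary for the smaller instance, one gets an adversary for the larger by always declaring a freshly queried ``new'' element to be above everything, since such elements leak no information about the embedded core. (2) The inequality $(\star)$: $.M.(m,n)\le 1+\max\bigl(.M.(m-1,n),\,.M.(m,n-1)\bigr)$, obtained by letting the algorithm first compare $a_m$ and $b_n$; against the restricted adversary this query must separate $a_m$ from $b_n$, so after the declaration one list is pushed entirely into one block and the instance reduces to $.M.(m,\ell)$ with $\ell\le n-1$ or to $.M.(k,n)$ with $k\le m-1$, and by (1) the worst case for the adversary is exactly $\max(.M.(m-1,n),.M.(m,n-1))$.

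Combining the induction hypothesis with $(\star)$ yields the key sub-claim $\max\bigl(.M.(m,n+1),\,.M.(m+1,n)\bigr)\ge .M.(m,n)+1$: the hypothesis applied to $(m-1,n)$ and to $(m,n-1)$ gives $.M.(m,n+1)\ge .M.(m-1,n)+2$ and $.M.(m+1,n)\ge .M.(m,n-1)+2$, and adding $(\star)$ closes it. Now take the first comparison in the $(m+1,n+1)$-instance, assuming $m,n\ge 1$ (the cases $m=0$ or $n=0$ are immediate from $.M.(1,q)=\lceil\lg(q+1)\rceil$). If $(i,j)\in\{(1,1),(m+1,n+1)\}$, declaring this comparison makes one of the four extreme elements a global extremum, so through the obvious simple split the instance reduces after one comparison to a fresh $(m,n+1)$- or $(m+1,n)$-instance; the adversary picks the better option, and the sub-claim gives $1+\max(.M.(m,n+1),.M.(m+1,n))\ge .M.(m,n)+2$. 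If instead $i=m+1$ or $j=n+1$ but we are not in the previous case, the adversary uses the simple split (case~1) with lower block $\{a_1,\dots,a_m\}\cup\{b_1,\dots,b_n\}$ and upper block $\{a_{m+1},b_{n+1}\}$; this separates $a_i$ from $b_j$ and splits the problem into the two independent unconstrained instances $.M.(m,n)$ and $.M.(1,1)=1$, for a total of $1+.M.(m,n)+1$.

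In the remaining case $i\le m$, $j\le n$, and $(i,j)\neq (1,1)$, the adversary runs an optimal $(m,n)$-adversary $\tau$ on the core $\bigl(\{a_1,\dots,a_m\},\{b_1,\dots,b_n\}\bigr)$: it answers the query exactly as $\tau$ does, and it appends $a_{m+1}$ and $b_{n+1}$ to the two upper blocks of $\tau$'s split. Because any valid split separates $a_i$ from $b_j$, the lower subproblem of $\tau$'s split is non-empty, so the enlarged upper subproblem $\bigl(|A_2|+1,\,|B_2|+1\bigr)$ is a strictly smaller, non-empty instance to which the induction hypothesis applies; it replaces the value $.M.(A_2,B_2)$ by at least $.M.(A_2,B_2)+2$, and the total becomes at least $1+.M.(A_1,B_1)+.M.(A_2,B_2)+2=.M_{i,j}.(m,n)+2\ge .M.(m,n)+2$, as required.

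The step I expect to be the main obstacle is the sub-case just described in which $\tau$'s optimal response is a \emph{complex} split (case~2 or~3): then the upper subproblem that grows is not $.M.$ but the one-sided-constrained function $/M.$ (equivalently $\backslash M.$, by the symmetries of Section~\ref{section:definition}), so the induction hypothesis for $.M.$ alone does not close the argument. The fix I have in mind is to run the induction \emph{simultaneously} for $.M.$ and for $F:=/M.$, proving $F(m+1,n+1)\ge F(m,n)+2$ in parallel: the reduction rules show that splitting a $.M.$- or an $F$-instance produces only subproblems of types $.M.$ and $F$, so the two inductions feed each other, provided one also establishes the analogues of monotonicity, of $(\star)$, and of the sub-claim for $F$ (using Lemma~\ref{prop:yao}(b), and the observation that the constraint $a_1>b_1$ already fixes the outcome of an $a_1$--$b_1$ query, so that boundary responses of an optimal $F$-adversary are covered). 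Carrying out this bookkeeping carefully — in particular the degenerate subproblems that appear when a block becomes empty — is the real content of the proof; everything else is the routine casework above.
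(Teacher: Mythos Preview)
Your proposal is correct and essentially matches the paper's proof: both induct on the size, case-split on the first comparison $(a_i,b_j)$, use the simple split into $(m,n)\oplus(1,1)$ for boundary queries, mimic an optimal $(m,n)$-adversary with $a_{m+1},b_{n+1}$ appended to the upper block for interior queries, and invoke the companion inequality for $/M.$ (equivalently $\backslash M.$) when that adversary's response is a complex split. The paper is slightly slicker in two places: it handles the corner $(i,j)=(m{+}1,n{+}1)$ by the reversal symmetry (mapping it to $(1,1)$, which then falls under the interior case) instead of via your sub-claim machinery with monotonicity and $(\star)$, and it quotes the needed bound $/M.(m{+}1,n{+}1)\ge/M.(m,n)+2$ as the already-known Lemma~\ref{thm:25381}(a) from~\cite{yao} rather than proving it in a simultaneous induction.
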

  \begin{proof}
The proof is by induction on $m$ and $n$.  The starting values for $m, n \leq 3$  can be easily checked in~\cite{taocp}.  Now suppose the theorem holds for any $m',n'$ satisfying $m'\leq m$, $n'\leq n$ and $m'+n'<m+n$, we then prove the case $(m,n)$. Note that our task is to design a strategy for the adversary for $(m+1,n+1)$.

 Suppose an algorithm begins by comparing $a_{i}$ and $b_{j}$,  where $i\leq m$,  $j=n+1$.  The adversary claims that $a_{i}<b_{j}$,  and follows the simple strategy,  yielding
$$. M_{i, j}.  (m+1, n+1)\geq1+. M.  (m, n)+. M.  (1, 1)\geq. M.  (m, n)+2$$

If $i=m+1$ and $j\leq n$,  the adversary claims that $a_{i}>b_{j}$,  and uses the simple strategy.  This leads to
$$. M_{i, j}.  (m+1, n+1)\geq1+. M.  (m, n)+. M.  (1, 1)\geq. M.  (m, n)+2$$

If $i\leq m$ and $j\leq n$, assume that if we compare $a_{i}$ and $b_{j}$ in $. M.  (m, n)$, the adversary's best strategy is
$1+. M\rho (p, q)+\lambda M.  (s, t)$, where $\rho,  \lambda\in \{. , /,\backslash\}$ and $s+t\geq1$ if $\lambda=.$ and $s,t\geq1$ if $\lambda=\{/,\backslash\}$. Then the adversary uses the same strategy here, and we get
\begin{displaymath}
\begin{array}{lll}
. M_{i, j}.  (m+1, n+1)&\geq1+. M\rho (p, q)+\lambda M.  (s+1, t+1)\\
&\geq1+. M\rho (p, q)+\lambda M.  (s, t)+2=. M.  (m, n)+2
\end{array}
\end{displaymath}
by applying the induction hypothesis and Lemma \ref{thm:25381}.

If $i=m+1$ and $j=n+1$,  we can handle this case as well by reversing the order of the elements in $A$ and $B$.

 Therefore  adversary can always find a strategy which results the number of comparisons not smaller than $. M.  (m, n)+2$,  no matter what the first comparison is. This  completes our proof.
\end{proof}
Now we are ready to prove  Lemma~\ref{25384}.
 \begin{proof}
We induce on $m$ and $n$.  The case for $1\leq m+n \leq 10$ are given in \cite{taocp}.
Now suppose that $m+n\geq 11$ and the lemma is already established for any $(m',n')$ satisfying $m'+n'<m+n$.

    \emph{Part (a).}

     If $m=1$, we have $.M.(1,n+1)\leq .M_{1,n+1}.(1,n+1)=\max\{1, 1+.M.(1,n)\}=1+.M.(1,n)$. If $m\geq 2$, then $.M_{m,n+1}.(m,n+1)=\max\{.M.(m-1, n+1),.M.(m, n)\}+1$, and by the induction we know $.M.  (m, n)\geq.M.  (m-1, n)+1\geq.M.(m-1, n+1)$, thus
    $$.M.(m, n+1)\leq.M_{m,n+1}.(m,n+1)\leq\max\{.M.(m-1, n+1), .M.(m, n)\}+1=.M.(m, n)+1.$$

    \emph{Part (b).}

    When $m=n$, $.M.(m+1,m)\geq .M.(m,m-1)+2\geq.M.(m-1,m-2)+4\geq\cdots\geq.M.(2,1)+2m-2=2m$ according to Lemma~\ref{25383}, thus $.M.(m+1,m)\geq.M.(m,m)+1$, since $.M.(m,m)\leq 2m-1$. When $m<n$,  we have
    $$.M.  (m+1, n)\geq.M.  (m, n-1)+2\geq.M.  (m, n)+1.$$
     The first inequality is due to Lemma~\ref{25383}, and the second one is by the induction hypothesis.
  \end{proof}

We can show a similar statement about $/M.$ function as well. The proof is very similar, and we give it as well for sake of completeness.
\begin{lemma}\label{thm:25381}
For any $m,n\geq1$, we have
\begin{enumerate}
  \item[(a)] $/M.  (m+1, n+1)\geq/M.  (m, n)+2$ \cite{yao};
\end{enumerate}
and for any $m,n\geq 1$ and $m\leq n$, we have
\begin{enumerate}
  \item[(b)]  $/M.  (m, n+1)\leq/M.  (m, n)+1$;
  \item[(c)] $/M.  (m+1, n)\geq/M.  (m, n)+1$,  except $ (m, n)= (1, 1), (2, 2)$ or  $ (3, 3)$.
\end{enumerate}
\end{lemma}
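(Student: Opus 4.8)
My plan is to follow the proofs of Lemmas~\ref{25383} and~\ref{25384} almost line for line, replacing $.M.$ by $/M.$ throughout and keeping track of the surviving left constraint ``$/$''. Concretely, I would prove (a) first on its own --- it plays for $/M.$ exactly the role Lemma~\ref{25383} plays for $.M.$ --- and then prove (b) and (c) together by a single induction on $m+n$, using (a) as a black box in the same way the proof of Lemma~\ref{25384} uses Lemma~\ref{25383}.

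\emph{Proof of (a).} Induct on $m+n$, taking the cases $m,n\le 3$ from Knuth's tables~\cite{taocp,table} (this is the analogue for $/M.$ of a fact essentially due to~\cite{yao}). For the step, suppose the algorithm for $/M.(m+1,n+1)$ first compares $a_i$ and $b_j$. If $j=n+1$ and $i\le m$, the adversary answers $a_i<b_j$ and uses the simple strategy with $A_1=\{a_1,\dots,a_m\}$, $A_2=\{a_{m+1}\}$, $B_1=\{b_1,\dots,b_n\}$, $B_2=\{b_{n+1}\}$; the left constraint is retained on the first piece, giving $/M_{i,j}.(m+1,n+1)\ge 1+/M.(m,n)+.M.(1,1)=/M.(m,n)+2$. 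The case $i=m+1$, $j\le n$ is the mirror image (answer $a_{m+1}>b_j$, same split). If $i\le m$ and $j\le n$, fix a value-maximizing adversary reply to the comparison $a_i$ vs $b_j$ in the game $/M.(m,n)$; by the case analysis of Knuth's adversary it reduces that game to $1+/M\rho(p,q)+\lambda M.(s,t)\ge /M.(m,n)$ with $\rho,\lambda\in\{.,/,\backslash\}$ (and $s+t\ge 1$ if $\lambda=.$, $s,t\ge 1$ otherwise, exactly as forced in Lemma~\ref{25383}). The adversary copies this split in $/M.(m+1,n+1)$ and assigns $a_{m+1}$ and $b_{n+1}$ to the second subproblem, forcing at least $1+/M\rho(p,q)+\lambda M.(s+1,t+1)$. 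Now $\lambda M.(s+1,t+1)\ge \lambda M.(s,t)+2$: when $\lambda=.$ this is Lemma~\ref{25383}, and when $\lambda\in\{/,\backslash\}$ it is the induction hypothesis of (a) together with the symmetry $\backslash M.(a,b)=/M.(b,a)$. Chaining, the adversary forces at least $/M.(m,n)+2$. The last case $i=m+1$, $j=n+1$ is handled by reversing the order of both lists.

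\emph{Proof of (b) and (c).} Induct on $m+n$, reading the base cases --- and, for (c), the three exceptional pairs --- from~\cite{table}. For (b), let the algorithm compare $a_m$ and $b_{n+1}$; since these are the maxima of $A$ and $B$, the only adversary moves split off the top of one list, so $/M_{m,n+1}.(m,n+1)=1+\max\{/M.(m,n),\,/M.(m-1,n+1)\}$ (for $m=1$ the second term is $0$ and (b) follows at once). For $m\ge 2$, the induction hypothesis of (b) gives $/M.(m-1,n+1)\le /M.(m-1,n)+1$, and the induction hypothesis of (c) --- applicable because a pair $(m-1,n)$ with $m-1<n$ is never one of $(1,1),(2,2),(3,3)$ --- gives $/M.(m-1,n)+1\le /M.(m,n)$; so the maximum is $/M.(m,n)$, proving (b). For (c) with $m<n$, combine (a) with (b): $/M.(m+1,n)\ge /M.(m,n-1)+2\ge (/M.(m,n)-1)+2=/M.(m,n)+1$.

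The genuine obstacle is (c) in the boundary case $m=n$, which is also where the three exceptions live. Here I would use two ingredients. First, the easy upper bound $/M.(m,m)\le 2m-2$: the algorithm skips the forced comparison $a_1$ versus $b_1$ (the ``$/$'' constraint already tells us $a_1>b_1$) and then runs tape merge on the remaining lists of sizes $m$ and $m-1$, for a total of $0+(m+(m-1)-1)=2m-2$. Second, a base inequality $/M.(m_0+1,m_0)\ge /M.(m_0,m_0)+1$ for the least admissible $m_0$ (namely $m_0=4$), verified in~\cite{table}. Iterating (a) then gives, for every $m\ge m_0$, $/M.(m+1,m)\ge /M.(m_0+1,m_0)+2(m-m_0)\ge 2m-1>2m-2\ge /M.(m,m)$, which is the desired inequality; and for $m\in\{1,2,3\}$ the base inequality genuinely fails (for instance $/M.(2,1)=0=/M.(1,1)$, and similarly for $(2,2)$ and $(3,3)$), so those pairs must be excluded. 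This step is harder than its $.M.$ counterpart in Lemma~\ref{25384} because there the naive chain down to $.M.(2,1)=2$ is already tight, whereas here the left constraint collapses the small values ($/M.(2,1)=0$), so the chain must be anchored at a larger, table-computed base case and $/M.(m,m)$ must be controlled separately from above; pinning down the exact threshold $m_0$, equivalently the exact list of exceptions, is the finite computation recorded in~\cite{table}.
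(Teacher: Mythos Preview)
Your proposal is correct and follows the paper's approach almost exactly: part~(a) is the $/M.$-analogue of Lemma~\ref{25383} (which the paper simply cites to~\cite{yao}), and your proofs of (b) and (c) mirror the paper's proof of Lemma~\ref{25384} line for line, including the split into the cases $m<n$ (combine (a) with (b)) and $m=n$ (chain (a) down to a small base and compare with an upper bound on $/M.(m,m)$).

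The only substantive difference is in the $m=n$ case of~(c). The paper uses the cruder bound $/M.(m,m)\le 2m-1$ (an instance of Lemma~\ref{prop:yao}(b)) together with the chain $/M.(m+1,m)\ge /M.(5,4)+2m-8$; you instead argue $/M.(m,m)\le 2m-2$ by observing that the ``$/$'' constraint already places $b_1$, so the problem is effectively $.M.(m,m-1)$. Your sharper bound is correct and makes the chain robust to the precise value of $/M.(5,4)$, whereas the paper's version relies on $/M.(5,4)=8$. One small caution: in your proof of~(a) you invoke Lemma~\ref{25383} for the $\lambda={.}$ case, but the paper's proof of Lemma~\ref{25383} in turn invokes part~(a) of the present lemma; this is harmless since both statements are being proved by induction on $m+n$ and the cross-references are always to strictly smaller instances, but it is cleaner to phrase the two as a single simultaneous induction (which is effectively what~\cite{yao} does).
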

 \begin{proof}
We induce on $m$ and $n$.  The case for $1\leq m+n \leq 10$ are given in \cite{taocp}.
Now suppose that $m+n\geq 11$ and the lemma is already established for any $(m',n')$ satisfying $m'+n'<m+n$.

    \emph{Part (b).}

     If $m=1$, we have $/M.(1,n+1)\leq /M_{1,n+1}.(1,n+1)=\max\{1, 1+/M.(1,n)\}=1+/M.(1,n)$. If $m\geq 2$, then $/M_{m,n+1}.(m,n+1)=\max\{/M.(m-1, n+1),/M.(m, n)\}+1$, and by the induction we know $/M.  (m, n)\geq/M.  (m-1, n)+1\geq/M.(m-1, n+1)$, thus
    $$/M.(m, n+1)\leq/M_{m,n+1}.(m,n+1)\leq\max\{/M.(m-1, n+1), /M.(m, n)\}+1=/M.(m, n)+1.$$

    \emph{Part (c).}

    When $m=n$, $/M.(m+1,m)\geq /M.(m,m-1)+2\geq/M.(m-1,m-2)+4\geq\cdots\geq/M.(5,4)+2m-8=2m$ according to \emph{Part (a)}, thus $/M.(m+1,m)\geq/M.(m,m)+1$, since $/M.(m,m)\leq 2m-1$. When $m<n$,  we have
    $$/M.  (m+1, n)\geq/M.  (m, n-1)+2\geq/M.  (m, n)+1.$$
     The first inequality is due to \emph{Part(a)}, and the second one is by the induction hypothesis.
  \end{proof}


\section {Lower bounds for $\alpha (m)$}\label{section:lowerbound}

The key step is to show that  $. M.  (m+25, n+38)\geq. M.  (m, n)+63$, which directly implies Theorem~\ref{thm:38/25}.
Since it's unavoidable to show similar statements for other restricted adversaries $\lambda M \rho$, we prove them by induction in parallel. In addition, by symmetry, we have  $/M.(m,n)=.M\backslash(m,n)$, $\backslash M.(m,n)=.M/(m,n)$,  and $/M/(m,n)=\backslash M \backslash(m,n)$ , so the following theorem is enough for our goal. The idea is similar with Lemma~\ref{25383}. We suggest readers to read the proof the Lemma~\ref{25383} at first as a warmup.
\begin{theorem}\label{2538}
For $m,n\geq0$ and $m+n\geq1$, we have
\begin{enumerate}
   \item[(a)] $. M.  (m+25, n+38)\geq. M.  (m, n)+63$;
\end{enumerate}
and for $m,n\geq 1$, we have
\begin{enumerate}
   \item[(b)] $/M.  (m+25, n+38)\geq/M.  (m, n)+63$;
   \item[(c)] $\backslash M.  (m+25, n+38)\geq\backslash M.  (m, n)+63$;
   \item[(d)] $/M\backslash (m+25, n+38)\geq/M\backslash (m, n)+63$;
   \item[(e)] $\backslash M \backslash (m+25, n+38)\geq\backslash M\backslash (m, n)+63$, except $(m, n)=(1, 1)$;
   \item[(f)] $\backslash M/ (m+25, n+38)\geq \backslash M/ (m, n)+63$, except $(m, n)=(2, 1)$.
\end{enumerate}
\end{theorem}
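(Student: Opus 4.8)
The plan is to prove all six inequalities (a)--(f) simultaneously by induction on $m+n$, mimicking the structure of the proof of Lemma~\ref{25383} but now with the jump parameters $(25,38,63)$ in place of $(1,1,2)$. The base cases are all $(m,n)$ with $m+n$ small enough that $.M.(m+25,n+38)$, $/M.(\cdot)$, etc.\ can be looked up in Knuth's tables~\cite{taocp,table}; since the tables cover $m,n\le 150$ this is a finite (if tedious) verification, and the listed exceptional pairs $(1,1)$, $(2,1)$ are precisely the places where the base-case check fails. For the inductive step, fix $\lambda M\rho$ and suppose an algorithm solving $\lambda M\rho(m+25,n+38)$ opens with a comparison of $a_i$ versus $b_j$; I must exhibit an adversary response forcing at least $\lambda M\rho(m,n)+63$ further comparisons.

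The case analysis splits according to where $(i,j)$ falls. If $i\le m$ and $j>n$ (so $j$ is ``deep'' in the enlarged $B$-list), the adversary answers $a_i<b_j$ and plays the \emph{simple} split that peels off exactly the extra $38$ elements of $B$ on the large side together with the matching structure, reducing to $\lambda M\rho'(p,q)$ on one side and $\lambda' M\rho(s,t)$ on the other in such a way that one side is an honest $(m,n)$-type problem and the other contributes the surplus; one then adds up $1 + \lambda M\rho(m,n) + (\text{cost of a block of size about }63)$, using Lemma~\ref{prop:yao}(a) to discard the intermediate constraints. Symmetrically if $i>m$, $j\le n$. The genuinely interesting case is $i\le m$, $j\le n$: here the adversary imitates the optimal strategy for $\lambda M\rho(m,n)$ on the same comparison, which decomposes it as $1+\lambda M\rho'(p,q)+\lambda' M\rho(s,t)$, and then \emph{inflates} the appropriate sub-problem by $(25,38)$. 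To close the induction we need to have already proved, for the inflated sub-problem, an inequality of the form $\mu M\nu(s+25,t+38)\ge \mu M\nu(s,t)+63$ — which is exactly one of (a)--(f) at a strictly smaller value of $m+n$, so the parallel induction goes through. We also invoke the monotonicity facts Lemma~\ref{25384} and Lemma~\ref{thm:25381} to handle the bookkeeping when $25$ and $38$ must be distributed unevenly between the two sides (e.g.\ when the split gives one side fewer than $25$ spare $A$-elements or fewer than $38$ spare $B$-elements, forcing us to route the remaining slack through the other side via a chain of $+1$ steps). The case $i>m$, $j>n$ is reduced to a previously-handled case by reversing the orders of $A$ and $B$, which is the symmetry noted before the theorem.

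The main obstacle is organizing the inner case $i\le m$, $j\le n$ so that the $(25,38)$ inflation can \emph{always} be validly absorbed by one of the two sub-problems. Unlike the $(1,1,2)$ situation of Lemma~\ref{25383}, where a single extra element on each side is trivially accommodated, here the adversary's optimal split for $(m,n)$ may put very few elements on the side we would like to inflate, so one must argue that in that event the \emph{other} side has room, and that a different (still valid) choice of split — possibly a complex one, changing $\rho'$ or $\lambda'$ — lets us apply the correct instance among (a)--(f). Keeping track of which of the nine constraint-pairs each sub-problem carries, reducing them to the four canonical functions via the stated symmetries, and checking that the induction hypothesis is available for each, is where the real work (and the risk of an overlooked sub-case) lies; the arithmetic $1 + 63 - 1 = 63$ and $25=25$, $38=38$ is routine once the combinatorial routing is pinned down. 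I would therefore write the proof by first stating a clean ``inflation lemma'' for simple splits, then doing the three-way $(i,j)$ case split, deferring all constraint-tracking to a table.
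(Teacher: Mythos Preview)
Your high-level plan---prove (a)--(f) in parallel by induction, split on the location of $(i,j)$, and in the ``inner'' region imitate the optimal adversary for $(m,n)$ while inflating one sub-problem by $(25,38)$---is exactly what the paper does. Two points, however, are not right as stated and would leave real gaps if you wrote the proof as outlined.

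First, the reversal symmetry you invoke for $i>m,\ j>n$ only disposes of that case in Part~(a) (and, partially, Part~(d)). For $/M.$, $\backslash M.$, $\backslash M\backslash$, $\backslash M/$ the left/right constraints are not symmetric, so reversing the lists does \emph{not} reduce $i\ge m+1,\ j\ge n+1$ to a previously handled case. In the paper this quadrant is where most of the work sits: one further subdivides according to whether $i\le 25$ or $i\ge 26$ and $j\le 38$ or $j\ge 39$, and in the sub-case $i\ge 26,\ j\ge 39$ the adversary looks at its optimal response to $(a_{i-25},b_{j-38})$ in $\lambda M\rho(m,n)$ and inflates the \emph{left} sub-problem by $(25,38)$, the mirror of what you describe for $i\le m,\ j\le n$.

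Second, your diagnosis of the main obstacle is off. The difficulty is not that a split might leave ``too few elements on the side we want to inflate'' (in the $i\le m,\ j\le n$ case the extra $25$ and $38$ sit automatically on the right sub-problem, and there is nothing to route). The real issue is that when you inflate the \emph{left} sub-problem in the $i\ge 26,\ j\ge 39$ case, its parameters $(p,q,\rho)$ can land exactly on the exceptions listed in (e) and (f), namely $(p,q,\rho)\in\{(1,1,/),(1,1,\backslash),(2,1,/)\}$, where the induction hypothesis is unavailable. Each of these triggers a separate ad-hoc adversary response (a different simple or complex split, sometimes using the specific values $\backslash M/(26,38)=62$, $/M.(25,39)=63$, $\backslash M.(26,38)=63$, $\backslash M/(28,39)=66$), and Lemma~\ref{thm:25381} is invoked only in one of these patches, not as a general ``slack-routing'' device. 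Without this exception-handling your induction does not close for Parts~(c), (e), (f).
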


\begin{proof}

The proof is by induction on $m$ and $n$. The starting values for $ m, n \leq 50$  are given in \cite{table}.  Now suppose the theorem holds for any $m',n'$ satisfying $m'\leq m$, $n'\leq n$ and $m'+n'<m+n$, we then prove the case $(m,n)$ where $m\geq 51$ or $n\geq 51$. Recall that our task is to design a strategy for the adversary for $(m+25,n+38)$.

\emph{\textbf{Part (a).}}
Suppose an algorithm begins by comparing $a_{i}$ and $b_{j}$,  if $i\leq m$ and $j\geq n+1$,  then the adversary claims $a_{i}<b_{j}$ and follows the simple strategy,  yielding
$$. M_{i, j}.  (m+25, n+38)\geq1+. M.  (m, n)+. M.  (25, 38)=. M.  (m, n)+63.$$

If $i\geq m+1$ and $j\leq n$, the adversary claims $a_{i}>b_{j}$ and uses the simple strategy. This leads to
$$. M_{i, j}.  (m+25, n+38)\geq1+. M.  (m, n)+. M.  (25, 38)=. M.  (m, n)+63.$$

If $i\leq m$ and $j\leq n$,  assume if we compare $a_{i}$ and $b_{j}$ in $. M.  (m, n)$,  the adversary's best strategy is
$1+. M\rho (p, q)+\lambda M.  (s, t)$ where $\lambda,  \rho\in\{.,/,\backslash$\},  then adversary uses the same strategy here, and we get
\begin{displaymath}
\begin{array}{lll}
. M_{i, j}.  (m+25, n+38)&\geq 1+. M\rho (p, q)+\lambda M.  (s+25, t+38)\\
&\geq1+. M\rho (p, q)+\lambda M.  (s, t)+63\\
&\geq . M.  (m, n)+63
\end{array}
\end{displaymath}
by using the induction hypothesis.

If $i\geq m+1$ and $j\geq n+1$,  then $i\leq25$ and $j\leq38$ cannot happen simultaneously, thus there are only three possible cases: ($i\geq26$, $j\leq38$), ($i\leq25$, $j\geq39$), or ($i\geq26$, $j\geq39$). Reversing the order of the elements in $A$ and $B$ maps all these three cases to the above ones,
so we can handle these cases as well by symmetry.

Therefore no matter which two elements are chosen to compare at the first step, the adversary can always find a strategy resulting the value not smaller than $. M.  (m, n)+63$. This completes the proof of \emph{Part (a)}.

\emph{\textbf{Part (b).}}
Suppose an algorithm begins by comparing $a_{i}$ and $b_{j}$,  if $i\leq m$ and $j\geq n+1$,  then the adversary claims $a_{i}<b_{j}$ and follows the simple strategy,  yielding
$$/M_{i, j}.  (m+25, n+38)\geq1+/M.  (m, n)+. M.  (25, 38)=/M.  (m, n)+63.$$

If $i\geq m+1$ and $j\leq n$, the adversary claims $a_{i}>b_{j}$ and uses the simple strategy. This leads to
$$/M_{i, j}.(m+25, n+38)\geq1+/M.  (m, n)+. M.  (25, 38)=/M.  (m, n)+63.$$

If $i\leq m$ and $j\leq n$,  assume if we compare $a_{i}$ and $b_{j}$ in $/M.  (m, n)$,  the adversary's best strategy is
$1+/M\rho (p, q)+\lambda M.  (s, t)$ where $\lambda,  \rho\in\{.,/,\backslash$\},  then adversary uses the same strategy here, and we get
\begin{displaymath}
\begin{array}{lll}
/M_{i, j}. (m+25, n+38)&\geq1+/M\rho (p, q)+\lambda M.  (s+25, t+38)\\
&\geq1+/M\rho (p, q)+\lambda M.  (s, t)+63\\
&\geq/M.  (m, n)+63
 \end{array}
 \end{displaymath}
 by using the induction hypothesis.

If $i\geq m+1$ and $j\geq n+1$, then $i\leq25$ and $j\leq38$ cannot happen simultaneously, so we only need to consider the following cases:

 If $i\leq25$ and $j\geq39$, or $i\geq26$ and $j\leq38$, the adversary uses the simple strategy, yielding
$$/M_{i,j}.  (m+25, n+38)\geq1+/M.  (25, 38)+. M.  (m, n)\geq1+62+.M.(m,n)\geq/M.  (m, n)+63.$$

 If $i\geq26$ and $j\geq39$:  assume if we compare $a_{i-25}$ and $b_{j-38}$ in $/M.  (m, n)$,  the adversary's best strategy is
$1+/M\rho (p, q)+\lambda M.  (s, t)$. If $ (p, q, \rho)\neq  (1, 1, /)$,  the adversary uses the same strategy,  and we get
\begin{displaymath}
\begin{array}{lll}
/M_{i, j}.  (m+25, n+38)&\geq1+/M\rho (p+25, q+38)+\lambda M.  (s, t)\\
&\geq1+/M\rho (p, q)+63+\lambda M.  (s, t)\\
&\geq/M.  (m, n)+63
 \end{array}
 \end{displaymath}
 by the induction hypothesis.
If $ (p, q, \rho)= (1, 1, /)$,  then we know that $i\geq 27$ and $j=39$, and  the adversary can use the simple strategy, yielding
\begin{displaymath}
\begin{array}{lll}
/M_{i, j}.  (m+25, n+38)&\geq1+/M.  (25, 39)+. M.  (m, n-1)\\
&=1+63+.M.(m,n-1)\geq63+/M.  (m, n).
 \end{array}
 \end{displaymath}
 The last inequality is due to Lemma \ref{prop:yao}.

Therefore the adversary can always find a strategy resulting the value not smaller than $/M.  (m, n)+63$, no matter what the first comparison is. This completes the proof of \emph{Part (b)}.

\emph{\textbf{Part (c).}}
Suppose an algorithm begins by comparing $a_{i}$ and $b_{j}$,  if $i\leq m$ and $j\geq n+1$,  then the adversary claims $a_{i}<b_{j}$ and follows the simple strategy,  yielding
$$\backslash M_{i, j}.  (m+25, n+38)\geq1+\backslash M.  (m, n)+. M.  (25, 38)=\backslash M.  (m, n)+63.$$

If $i\geq m+1$ and $j\leq n$, the adversary claims $a_{i}>b_{j}$ and uses the simple strategy. This leads to
$$\backslash  M_{i, j}.  (m+25, n+38)\geq1+\backslash  M.  (m, n)+. M.  (25, 38)=\backslash  M.  (m, n)+63.$$

If $i\leq m$ and $j\leq n$,  assume if we compare $a_{i}$ and $b_{j}$ in $. M.  (m, n)$,  the adversary's best strategy is
$1+\backslash M\rho (p, q)+\lambda M.  (s, t)$ where $\lambda,  \rho\in\{.,/,\backslash$\},  then adversary uses the same strategy here, and we get
\begin{displaymath}
\begin{array}{lll}
\backslash  M_{i, j}.  (m+25, n+38)&\geq1+\backslash M\rho (p, q)+\lambda M.  (s+25, t+38)\\
&\geq1+\backslash  M\rho (p, q)+\lambda M.  (s, t)+63\\
&\geq. M.  (m, n)+63
\end{array}
\end{displaymath}
  by using the induction hypothesis.

Similar with the above argument, if $i\geq m+1$ and $j\geq n+1$,  we only need to investigate the following cases:

 If $i\leq25$ and $j\geq39$,  or $i\geq27$ and $j\leq38$,   the adversary uses the complex strategy with $a_{26}$ in both subproblems. This leads to
 \begin{displaymath}
\begin{array}{lll}
\backslash M_{i, j}.  (m+25, n+38)&\geq1+\backslash M/ (26, 38)+\backslash M.  (m, n)\\
&=1+62+\backslash M.  (m, n)=\backslash M.  (m, n)+63.
\end{array}
\end{displaymath}

 If $i=26$ and $j\leq38$, since $i\geq m+1$ and $j\geq n+1$, then $m\leq25$ and $n\leq37$ and these cases have been checked as starting values.

 If $i\geq26$ and $j\geq39$: assume if we compare $a_{i-25}$ and $b_{j-38}$ in $\backslash M.  (m, n)$,  the adversary's best strategy is
$1+\backslash M\rho (p, q)+\lambda M.  (s, t)$.  If $ (p, q, \rho)\neq  (1, 1, \backslash)$ or $(2, 1, /)$,  the adversary  uses the same strategy, yielding
\begin{displaymath}
\begin{array}{lll}
\backslash M_{i, j}.  (m+25, n+38)&\geq1+\backslash M\rho (p+25, q+38)+\lambda M.  (s, t)\\
&\geq1+\backslash M\rho (p, q)+63+\lambda M.  (s, t)\\
&\geq \backslash M.  (m, n)+63
\end{array}
\end{displaymath}
 by the induction hypothesis.
If $ (p, q, \rho)= (1, 1, \backslash)$,  we have $i=26$ and $j\geq40$.  The adversary claims $a_{i}<b_{j}$ and  follows the simple strategy, yielding
\begin{displaymath}
\begin{array}{lll}
\backslash M_{i, j}.  (m+25, n+38)&\geq1+\backslash M.  (26, 38)+. M.  (m-1, n)\\
&=1+63+. M.  (m-1, n)\geq63+\backslash M. (m, n)
\end{array}
\end{displaymath}
by using Lemma \ref{prop:yao}.
If $ (p, q, \rho)= (2, 1, /)$,  we have $i\geq28$ and $j=39$ or $i=26$ and $j>39$, since the case where $i=26$ and $j>39$ has already been considered,   we only need to investigate the case where $i\geq28$ and $j=39$. Notice that $j\geq n+1$, i.e.  $n \leq 38$,  hence $m\geq 50$.
If $i>28$,  the adversary claims $a_{i}>b_{j}$ and follows the complex strategy with $a_{28}$ in both subproblems. This leads to
\begin{displaymath}
\begin{array}{lll}
\backslash M_{i, j}.  (m+25, n+38)&\geq1+\backslash M/ (28, 39)+\backslash M.  (m-2, n-1)\\
&=1+66+\backslash M.  (m-2, n-1)\\
&\geq66+\backslash M.  (m-1, n-1)\\
&\geq63+1+\backslash M/ (2, 1)+\backslash M.  (m-1, n-1)\\
&=63+\backslash M_{i-25, 1}.  (m, n)\geq\backslash M.  (m, n)+63.
\end{array}
\end{displaymath}
 The second inequality is according to Lemma \ref{thm:25381} and the second equality is the assumption of the best strategy for the adversary.
 If $i=28$ and $j=39$, we have $m\leq 27$ and $n\leq 38$, which have been checked as starting values.

Therefore the adversary can always find a strategy resulting the value not smaller than $\backslash M.  (m, n)+63$. This completes the proof of \emph{Part (c)}.

%
 \emph{\textbf{Part (d).}}
 If $i\leq25$ and $j\geq39$, or $i\geq26$ and $j\leq38$, the adversary uses the simple strategy, yielding
 \begin{displaymath}
\begin{array}{lll}
/M_{i,j}\backslash  (m+25, n+38)&\geq1+/M.  (25, 38)+. M\backslash  (m, n)\\
&\geq1+62+.M\backslash(m,n)\geq/M\backslash  (m, n)+63.
\end{array}
\end{displaymath}

 If $i\geq26$ and $j\geq39$:  assume if we compare $a_{i-25}$ and $b_{j-38}$ in $/M\backslash  (m, n)$,  the adversary's best strategy is
$1+/M\rho (p, q)+\lambda M\backslash  (s, t)$. If $ (p, q, \rho)\neq  (1, 1, /)$,  the adversary uses the same strategy,  and we get
\begin{displaymath}
\begin{array}{lll}
/M_{i, j}\backslash  (m+25, n+38)&\geq1+/M\rho (p+25, q+38)+\lambda M\backslash  (s, t)\\
&\geq1+/M\rho (p, q)+63+\lambda M\backslash (s, t)\\
&=/M\backslash  (m, n)+63
\end{array}
\end{displaymath}
by the induction hypothesis.
If $ (p, q, \rho)= (1, 1, /)$,  then we know that $i\geq 27$ and $j=39$, and  the adversary can use the simple strategy, yielding
\begin{displaymath}
\begin{array}{lll}
/M_{i, j}\backslash (m+25, n+38)&\geq1+/M.  (25, 39)+. M\backslash (m, n-1)\\
&\geq1+63+.M\backslash(m,n-1)\geq63+/M\backslash(m, n).
\end{array}
\end{displaymath}
The last inequality is due to Lemma \ref{prop:yao}.

 If $i\leq 25$ and $j\leq 38$, reserving the order of the elements maps this case to the above cases, thus we can handle this case as well by symmetry.

Therefore the adversary can always find a strategy which is not smaller than $/M\backslash (m, n)+63$, no matter which the first comparison is. So \emph{Part (d)} is true.

\emph{\textbf{Part (e).}}
 If $i\geq m+1$ and $j\leq n$, or $i\leq m$ and $j\geq n+1$, the adversary uses the simple strategy, yielding
$$\backslash M_{i,j}\backslash  (m+25, n+38)\geq1+\backslash M.(m,n)+.M\backslash  (25, 38)=\backslash M\backslash  (m, n)+63.$$
 If $i\leq m$ and $j\leq n$:  assume if we compare $a_{i}$ and $b_{j}$ in $\backslash M\backslash  (m, n)$,  the adversary's best strategy is
$1+\backslash M\rho (p, q)+\lambda M\backslash  (s, t)$. If $ (s, t, \lambda)\neq  (1, 1, \backslash)$,  the adversary uses the same strategy,  and we get
\begin{displaymath}
\begin{array}{lll}
\backslash M_{i, j}\backslash  (m+25, n+38)&\geq1+\backslash M\rho (p, q)+\lambda M\backslash  (s+25, t+38)\\
&\geq1+\backslash M\rho (p, q)+\lambda M\backslash (s, t)+63\\
&=\backslash M\backslash  (m, n)+63
\end{array}
\end{displaymath}
 by the induction hypothesis.
If $ (s, t, \lambda)= (1, 1, \backslash)$,  then we know that $i\leq m-1$ and $j=n$, and  the adversary can use the simple strategy, yielding
\begin{displaymath}
\begin{array}{lll}
\backslash M_{i, j}\backslash (m+25, n+38)&\geq1+\backslash M.(m, n-1)+. M\backslash (25, 39)  \\
&\geq1+.M\backslash(m,n-1)+63\geq63+\backslash M\backslash(m, n).
\end{array}
\end{displaymath}
The last inequality is due to Lemma \ref{prop:yao}.

If $i\geq m+1$ and $j\geq n+1$, then $i\leq25$ and $j\leq38$ cannot happen simultaneously, so we only need to investigate the following cases:

 If $i\leq25$ and $j>38$, or $i\geq27$ and $j\leq38$,  the adversary uses the complex strategy with $a_{26}$ in both subproblems. This leads to
$$\backslash M_{i, j}\backslash (m+25, n+38)\geq1+\backslash M/ (26, 38)+\backslash M\backslash (m, n)=\backslash M\backslash (m, n)+63.$$

 If $i=26$ and $j\leq38$, then $m\leq25$ and $n\leq38$, which have been checked as starting values.

 If $i\geq26$ and $j\geq39$: assume if we compare $a_{i-25}$ and $b_{j-38}$ in $\backslash M\backslash (m, n)$,  the adversary's best strategy is
$1+\backslash M\rho (p, q)+\lambda M\backslash (s, t)$. If $ (p, q, \rho)\neq (1, 1, \backslash)$ or  $(2, 1, /)$,  the adversary uses the same strategy, yielding
$$\backslash M_{i, j}\backslash (m+25, n+38)\geq1+\backslash M\rho (p+25, q+38)+\lambda M\backslash (s, t)\geq \backslash M\backslash (m, n)+63$$ by the induction hypothesis.
If $ (p, q, \rho)= (1, 1, \backslash)$,  we have $i=26$ and $j>39$.  The adversary claims $a_{i}<b_{j}$ and uses the simple strategy, leading to
\begin{displaymath}
\begin{array}{lll}
\backslash M_{i, j}\backslash (m+25, n+38)&\geq1+\backslash M.  (26, 38)+. M\backslash (m-1, n)\\
&=64+. M\backslash (m-1, n)\geq63+\backslash M\backslash (m, n)
\end{array}
\end{displaymath}
 by using Lemma~\ref{prop:yao}.

If $ (p, q, \rho)= (2, 1, /)$,  the case where $i\geq28$ and $j=39$ is the only unconsidered case.
If we compare $a_{2}$ with $b_{1}$ in $\backslash M\backslash (m, n)$,  and the adversary claims $a_{2}>b_{1}$ , then the best strategy must be $1+\backslash M.  (1, 1)+. M\backslash (m-1, n-1)$, otherwise the adversary claims $a_{2}<b_{1}$ , and the best strategy must be $1+. M\backslash (m-2, n)$.  So we get $\backslash M\backslash (m, n)\leq \backslash M_{2, 1}\backslash (m, n)= \max\{1+. M\backslash (m-2, n), 2+. M\backslash (m-1, n-1)\}$.  If $. M\backslash (m-2, n)\geq 1+. M\backslash (m-1, n-1)$, then $.M\backslash(m-2,n)+1\geq\backslash M\backslash(m,n)$, and the adversary splits the problem $\backslash M\backslash (m+25, n+38)$ into two independent subproblems $\backslash M.  (27, 38)$ and $. M\backslash (m-2, n)$ before the algorithm begins, and this leads to
$$\backslash M\backslash (m+25, n+38)\geq\backslash M.  (27, 38)+. M\backslash (m-2, n)\geq63+\backslash M\backslash (m, n)$$
 Otherwise ($. M\backslash (m-2, n)<1+. M\backslash (m-1, n-1)$), then $2+.M\backslash(m-1,n-1)\geq\backslash M\backslash(m,n)$ and the adversary claims $a_{i}>b_{j}$ and uses the simple strategy, yielding

$$\backslash M\backslash (m+25, n+38)\geq\backslash M.  (26, 39)+. M\backslash (m-1, n-1)+1\geq\backslash M\backslash (m, n)+63.$$

Therefore the adversary can always find a strategy resulting the value not smaller than $\backslash M\backslash  (m, n)+63$. This completes the proof of \emph{Part (e)}.


\emph{\textbf{Part (f).}}
If $n>50$,  we can assume $j\geq\lfloor\frac{38+n}{2}\rfloor\geq 44$ by symmetry.
 If $i\leq25$,  the adversary claims $a_{i}<b_{j}$ and uses the complex strategy with $a_{26}$ in both subproblems. This leads to
 $$\backslash M/ (m+25, n+38)\geq 1+\backslash M/ (26, 38)+\backslash M/ (m, n)= 63+\backslash M/ (m, n).$$

 If $i\geq26$:  assume that if we compare $a_{i-25}$ and $b_{j-38}$ in $\backslash M\backslash (m, n)$, the adversary's best strategy is
$1+\backslash M\rho (p, q)+\lambda M/ (s, t)$.  If $ (p, q, \rho)\neq  (1, 1, \backslash),  (2, 1, /)$,   adversary uses the same strategy, yielding
$$\backslash M_{i, j}/ (m+25, n+38)\geq1+\backslash M\rho (p+25, q+38)+\lambda M/ (s, t)\geq \backslash M/ (m, n)+63$$ by using the induction hypothesis.
If $ (p, q, \rho)= (1, 1, \backslash)$,  we have $i=26$ and $j\geq40$.  The adversary claims $a_{i}<b_{j}$  and use the simple strategy, yielding
\begin{displaymath}
\begin{array}{lll}
\backslash M_{i, j}/ (m+25, n+38)&\geq1+\backslash M.  (26, 28)+. M/ (m-1, n)\\
&=64+. M/ (m-1, n)\geq63+\backslash M/ (m, n)
\end{array}
\end{displaymath}
by using Lemma \ref{prop:yao}.
If $ (p, q, \rho)= (2, 1, /)$,  we have $i\geq28$ and $j=39$, violating the assumption  $j\geq44$.

If $n\leq50$,  then $m>50$ and we can assume $i\geq\lfloor\frac{25+m}{2}\rfloor\geq 38$ by symmetry.
  If $j\leq38$,  the adversary claims $a_{i}>b_{j}$,  and uses the complex strategy with $a_{26}$ in both subproblems, yielding
 $$\backslash M/ (m+25, n+38)\geq 1+\backslash M/ (26, 38)+\backslash M/ (m, n)\geq 63+\backslash M/ (m, n).$$

 If $j\geq39$: assume if we compare $a_{i-25}$ and $b_{j-38}$ in $\backslash M\backslash (m, n)$,  the adversary's best strategy is
$1+\backslash M\rho (p, q)+\lambda M/ (s, t)$.  If $ (p, q, \rho)\neq  (1, 1, \backslash)$ or  $(2, 1, /)$,  then adversary uses the same strategy, thus
$$\backslash M_{i, j}/ (m+25, n+38)\geq1+\backslash M\rho (p+25, q+38)+\lambda M/ (s, t)\geq \backslash M/ (m, n)+63$$ by using the induction hypothesis.
If $ (p, q, \rho)= (2, 1, /)$,  we have $j=39$.
Similar with the argument in \emph{Part (e)}, we get $\backslash M/ (m, n)\leq \backslash M_{2, 1}/ (m, n)= \max\{1+. M/ (m-2, n), 2+. M/ (m-1, n-1)\}$.  If $. M/ (m-2, n)\geq 1+. M/ (m-1, n-1)$,  the adversary splits the problem $\backslash M/ (m+25, n+38)$ into two independent subproblems $\backslash M.  (27, 38)$ and $. M/ (m-2, n)$ before the first comparison begins,  and this leads to
$$\backslash M/ (m+25, n+38)\geq\backslash M.  (27, 38)+. M/ (m-2, n)\geq63+\backslash M/ (m, n).$$
Otherwise ($. M/ (m-2, n)<1+. M/ (m-1, n-1)$),  then the adversary claims $a_{i}>b_{j}$,  and uses the simple strategy, yielding
\begin{displaymath}
\begin{array}{lll}
\backslash M/ (m+25, n+38)&\geq\backslash M.  (26, 39)+. M/ (m-1, n-1)+1\\
&\geq65+. M/ (m-1, n-1)\geq\backslash M/ (m, n)+63.
\end{array}
\end{displaymath}
If $ (p, q, \rho)= (1, 1, \backslash)$,  then $i=26$, violating the assumption $i\geq38$.

Therefore the adversary can always find a strategy resulting the value not smaller than $\backslash M/ (m, n)+63$. This completes the proof of \emph{Part (f)}.
\end{proof}

Now, we are ready to prove  Theorem~\ref{thm:38/25}.

\begin{proof}
   The small cases $1\leq m \leq25$ and $1\leq n \leq38$ are given in \cite{table}.
   Given any pair $ (m, n)$ satisfying $m\leq n\leq \frac{38}{25}m$,  let $m=25p+s$,  $n=38q+t$ where $0<s\leq25$, $0<t\leq38$, and observe that $m\geq 25q+\lceil\frac{25}{38}t\rceil$, thus
   \begin{displaymath}
   \begin{array}{lll}
   \overline{.M.}(m,n)=\overline{.M.}(25p+s,38q+t)\leq\overline{.M.}(25q+\lceil\frac{25}{38}t\rceil,38q+t)\leq\overline{.M.}(\lceil\frac{25}{38}t\rceil,t)=0.
   \end{array}
   \end{displaymath}
  The first inequality is due to Lemma \ref{25384} and the second one is due to Theorem \ref{2538}.
%
%
\end{proof}

\section{Limitations of Knuth's adversary methods}\label{section:upperbound2}
In this section, we prove Theorem~\ref{thm:9/5}, which shows Knuth's adversary methods can not provide lower bounds beyond $\alpha(m)\geq9\lceil m/5\rceil$.
Actually, we prove a stronger result:
\begin{theorem}\label{thm:5k9k}
     $. M.  (5k, 9k+12t)\leq14k+11t-2$,  for $k,t\geq0$ and $t+k\geq1$.
\end{theorem}
With this theorem, Theorem~\ref{thm:9/5} is obvious, since if $n\geq9\lceil m/5\rceil$,
$\overline{.M.}(m,n)\geq\overline{.M.}(m,9\lceil m/5\rceil)\geq1$.
\begin{proof}
The proof is by induction on $k$ and $t$. We verify the case $k\leq10$ first:
when $t\geq k/10+2/5$,  we have
$$. M.  (5k, 9k+12t)\leq M (5k, 10k+12t-k)\leq M_{bm} (5k, 10k+12t-k)\leq14k+11t-2.$$
When $t<k/10+2/5$,  these finite cases can be checked in \cite{table}.

Now suppose $k\geq 11$ and we have already proven this theorem for any  $(k',t')$ satisfying $k'<k$,  or $k'=k$ and $t'<t$.
 Since $. M.  (m, n)=min_{i, j}. M_{i, j}.  (m, n)\leq. M_{50, 79}.  (m, n)$,  thus it's enough to show
 $. M_{50, 79}.  (5k, 9k+12t)\leq14k+11t-2$.
 In other word, an algorithm which begins by comparing $a_{50}$ with $b_{79}$ can "beat" the adversary. We'll prove it by enumerating the adversary's best strategy.

 \emph{\textbf{Case(a).}}  The adversary claims $a_{50}<b_{79}$  and follows three possible strategies.

 \emph{(i).} The adversary uses the simple strategy, then
 $$.M_{50, 79}.(5k, 9k+12t)=1+.M.(50+x, 78-y)+.M.(5k-50-x, 9k+12t-78+y),$$
where $x, y\geq0$. Thus it's sufficient to show
 $$\overline{.M.}(50+x, 78-y)+\overline{.M.}(5k-50-x, 9k+12t-78+y)\geq \overline{.M.}(5k-50, 9k-78+12t)\geq t+2.$$
  The first inequality is according to Lemma~\ref{25384} and the second one is by the induction hypothesis.

\emph{(ii).}  The adversary uses the complex strategy,  with $a_{51+x}$ in both subproblems.
 \begin{displaymath}
\begin{array}{lll}
 . M_{50, 79}.  (5k, 9k+12t)&=1+.M/(51+x, 78-y)+\backslash M.(5k-50-x, 9k+12t-78+y)\\
 &\leq 1+.M.(51+x, 78-y)+. M.  (5k-50-x, 9k+12t-78+y),
 \end{array}
\end{displaymath}
 where $x,y\geq0$. Thus it's equivalent to show
 $$\overline{.M.}(51+x,78-y)+\overline{.M.}(5k-50,9k+12t-78+y)-1\geq \overline{.M.}(5k-50, 9k-78+12t)-1\geq t+1.$$

\emph{(iii).} The adversary uses the complex strategy  with $b_{78-y}$ in both subproblems.
\begin{displaymath}
\begin{array}{lll}
. M_{50, 79}.  (5k, 9k+12t)&=1+.M\backslash(50+x, 78-y)+/M.(5k-50-x, 9k+12t-77+y)\\
&\leq 1+.M.(50+x, 78-y)+.M.(5k-50-x, 9k+12t-77+y),
\end{array}
\end{displaymath}
 where $x, y\geq0$. Thus it's equivalent to show
 $$\overline{.M.}(50+x, 78-y)+\overline{.M.}(5k-50-x, 9k+12t-77+y)-1\geq \overline{.M.}(5k-50, 9k-78+12t)-1\geq t+1.$$

 \emph{\textbf{Case(b).}}   The adversary claims $a_{50}>b_{79}$ and follows  three possible strategies.

 \emph{(i).}  The adversary uses the simple strategy, then
 $$. M_{50, 79}.  (5k, 9k+12t)=1+. M.  (49-x, 79+y)+. M.  (5k-50+1+x, 9k+12t-79-y),$$
 where $x, y\geq0$. Thus it's sufficient to show
  $$\overline{.M.}(49-x, 79+y)+\overline{.M.}(5k-50+1+x, 9k+12t-79-y)\geq t+2.$$
 Let $5p\leq x\leq 5p+4$ and $12q-10\leq y\leq12q+1$, then we claim that $\overline{.M.}(49-x,79+y)\geq p+q+1$.
If $q\leq 2$, these finite cases can be checked in \cite{table}.
Otherwise ($q\geq 3$), then $79+y>2\times (49-5p)$, and $. M. (49-5p, 81+12(q-1))\leq M_{bm} (49-5p, 81+12(q-1))\leq127+11(q-1)-6p$.
Therefore $\overline{.M.}(49-x,79+y)\geq \overline{.M.}(49-5p,81+12(q-1))\geq1+p+q$ due to Lemma~\ref{25384}.

 Since $\overline{.M.}(49-x, 79+y)\geq p+q+1$, if $p+q\geq t+1$, we've done.
 If $p+q\leq t$, according to Lemma~\ref{25384} and the induction hypothesis, we have
 $$\overline{.M.}(5k-50+1+x, 9k+12t-79-y)\geq \overline{.M.}(5k-50+5p+5, 9k-90+9p+9+12(t-q-p))$$
 $$\geq t-p-q+1.$$

 Thus $\overline{.M.}(49-x, 79+y)+\overline{.M.}(5k-50+1+x, 9k+12t-79-y)\geq t+2$.

 \emph{(ii).}  The adversary uses the complex strategy with $a_{49-x}$ in both subproblems, then
   \begin{displaymath}
\begin{array}{lll}
 . M_{50, 79}.  (5k, 9k+12t)&=1+.M/(49-x, 79+y)+\backslash M.  (5k-50+2+x, 9k+12t-79-y)\\
 &\leq 2+. M.  (49-x, 79+y)+1+. M.  (5k-50+1+x, 9k+12t-79-y)\\
 &\leq 14k+11t-2,
  \end{array}
\end{displaymath}
 where $x, y\geq0$.

\emph{(iii).}  The adversary uses the complex strategy with $b_{80+y}$ in both subproblems, then
 \begin{displaymath}
\begin{array}{lll}
 . M_{50, 79}.  (5k, 9k+12t)&=1+.M\backslash(49-x, 80+y)+/M.(5k-50+1+x, 9k+12t-79-y)\\
 &\leq 2+. M.  (49-x, 79+y)+. M.  (5k-50+1+x, 9k+12t-79-y)\\
 &\leq 14k+11t-2,
 \end{array}
\end{displaymath}
 where $x,y\geq0$.
\end{proof}

\section{Upper bounds for $\alpha(m)$}\label{section:upper_bound1}


 In this section, we give better upper bounds for $\alpha(m)$ by proposing a simple procedure. This procedure only involves the first two elements in each $A$ and $B$ and can be viewed as a modification of binary merge.
  \begin{algorithm}
  \caption{Modified Binary Merge}
  \begin{algorithmic}
  \STATE Compare $a_{1}$ and $b_{2}$

    \IF{$a_{1}>b_{2}$}
     \STATE  merge $ (m, n-2)$.

        \ELSE
        \STATE compare $a_{2}$ and $b_{2}$

              \IF {$a_{2}>b_{2}$}
              \STATE compare $a_1$ and $b_1$, then merge $ (m-1, n-2)$.

              \ELSE
              \STATE compare $a_{2}$ and $b_{1}$

                   \IF {$a_{2}>b_{1}$}
                   \STATE compare $a_1$ and $b_1$, then merge $ (m-2, n-1)$ .

                   \ELSE
                   \STATE merge $ (m-2, n)$.
         \ENDIF
         \ENDIF
         \ENDIF
\end{algorithmic}
\end{algorithm}

It is easy to see that this procedure induces the following recurrence relation:
$$M (m, n)\leq \max\{M (m, n-2)+1, M (m-1, n-2)+3, M (m-2, n)+3, M (m-2, n-1)+4\}. $$

In the following, we'll use the induction to give  better upper bounds for $n\in[2m-2,3m]$.
The following proofs are very similar, but we give all the details for sake of completeness.
\begin{theorem}\label{thm:2m2k}
   $M (m, 2m+2k)\leq 3m+k-2$, for $m\geq 3$ and $k\geq -1$.
\end{theorem}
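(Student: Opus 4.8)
The plan is to prove Theorem~\ref{thm:2m2k} by induction, driving everything through the recurrence
$$M(m,n)\le\max\{M(m,n-2)+1,\ M(m-1,n-2)+3,\ M(m-2,n)+3,\ M(m-2,n-1)+4\}$$
produced by the Modified Binary Merge procedure. Writing $n=2m+2k$, the subtle point in setting up the induction is the third branch $M(m-2,n)+3$: there the second argument is unchanged while $m$ drops by $2$, so neither $n$ nor $m+n$ can serve as the induction parameter. The right choice is a lexicographic induction on $(m,k)$ with $m$ primary; within a fixed $m$ one still induces on $k$, because the first branch $M(m,n-2)+1$ keeps $m$ and lowers $k$ by one. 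Under this order the four branches reference, respectively, the cases $(m,k-1)$, $(m-1,k)$, $(m-2,k+2)$, and (again, after a monotonicity step) $(m-2,k+2)$, all lexicographically below $(m,k)$.

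There are two families of base cases. When $k=-1$ the claimed bound $M(m,2m-2)\le 3m-3$ is just the tape-merge bound $M(m,n)\le m+n-1$, so nothing is needed. When $m\in\{3,4\}$ the recurrence does not close on its own (it would reference $M(1,\cdot)$ and $M(2,\cdot)$, and the arithmetic fails there), so instead I would invoke the known exact values of $M(3,\cdot)$ and $M(4,\cdot)$ for the finitely many small $k$ and the bound $M(m,n)\le M_{bm}(m,n)$ for large $k$ --- once $n/m\ge 4$ binary merge already gives at most $3m+k-2$. This is the same pattern by which Lemma~\ref{25384} and Theorem~\ref{2538} dispose of their starting values from the tables.

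For the inductive step, fix $m\ge 5$, $k\ge 0$, and assume the statement for all lexicographically smaller pairs. With $n=2m+2k$ one has $M(m,n-2)+1=M(m,2m+2(k-1))+1\le(3m+(k-1)-2)+1=3m+k-2$; next $M(m-1,n-2)+3=M(m-1,2(m-1)+2k)+3\le(3(m-1)+k-2)+3=3m+k-2$; next $M(m-2,n)+3=M(m-2,2(m-2)+2(k+2))+3\le(3(m-2)+(k+2)-2)+3=3m+k-3$; and finally, using the trivial monotonicity $M(m',n'-1)\le M(m',n')$ (pad $B$ with a $+\infty$ entry), $M(m-2,n-1)+4\le M(m-2,n)+4\le(3m+k-3)+1=3m+k-2$. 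Every branch is at most $3m+k-2$, so the recurrence yields $M(m,2m+2k)\le 3m+k-2$, closing the induction. For the last branch one could instead quote the odd-offset version $M(m-2,2(m-2)+2(k+1)+1)\le 3(m-2)+(k+1)-2$; the monotonicity route just keeps the proof self-contained within this statement.

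I do not expect a deep obstacle, but two points need care. First, the induction must be arranged exactly as above so that all four branches are genuinely smaller --- the $M(m-2,n)+3$ term with an unchanged second coordinate is what forces the primary induction on $m$ together with a nested induction on $k$. Second, the first two branches are tight, evaluating to exactly $3m+k-2$, so the ``$-2$'' in the statement is precisely the slack the recurrence can absorb: the arithmetic must be carried out carefully, and the odd second argument in the fourth branch must be disposed of (via monotonicity or the odd companion). The only genuinely external input is the pair of base cases $m=3,4$, where one leans on the known exact formulas for two- and three-element merging rather than on the recurrence.
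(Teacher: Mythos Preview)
Your proposal is essentially the paper's proof: same recurrence, same induction on $(m,k)$, same base case $k=-1$ via tape merge and $m=3$ via Hwang--Murphy. The one substantive difference is how you handle the third and fourth branches. The paper does \emph{not} feed $M(m-2,2m+2k)$ and $M(m-2,2m+2k-1)$ back into the induction hypothesis; instead it bounds them directly by binary merge, using $M_{bm}(m-2,2m+2k)\le 3(m-2)+k+1$ and $M_{bm}(m-2,2m+2k-1)\le 3(m-2)+k$. That lets the inductive step run from $m\ge 4$, so only $m=3$ is needed as a base case, and no monotonicity step is required. Your self-referential route is perfectly valid too; it just forces the extra base case $m=4$ and starts the step at $m\ge 5$.

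One small slip to fix in your base-case discussion: the claim ``once $n/m\ge 4$ binary merge already gives at most $3m+k-2$'' is off by one at the boundary. For instance $M_{bm}(4,16)=4\cdot 3+\lfloor 16/4\rfloor-1=15>14=3\cdot 4+4-2$, and similarly $M_{bm}(3,12)=11>10$. The inequality does hold once $n/m>4$ strictly, so just enlarge your ``finitely many small $k$'' to include the boundary case (covered by M\"onting's exact $M(4,\cdot)$ and Hwang's exact $M(3,\cdot)$) and the argument goes through.
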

\begin{proof}
  We induce on $k$ and $m$. The case for $k=-1$ just follows tape merge algorithm. The case for $m=3$ are given by Hwang~\cite{hwang1980optimal} and Murphy~\cite{Murphyreport}.

%
%
%
%
%
%
   Now suppose that $m\geq 4$ and $k\geq 0$, and the claim has already been proven for any $(m',k')$ satisfying $m'+k'\leq m+k-1$.
   According to the procedure,  we have
   \begin{displaymath}
\begin{array}{lll}
M (m, 2m+2k)\leq \max\{&M (m, 2m+2 (k-1))+1, M (m-1, 2m+2k-2)+3,\\
&M (m-2, 2m+2k)+3, M (m-2, 2m+2k-1)+4\}
\end{array}
\end{displaymath}
  $\leq \max\{3m+k-2$  (the induction hypothesis),  $3 (m-1)+k-2+3$ (the induction hypothesis),  $3 (m-2)+1+k+3$  (binary merge),  $3 (m-2)+k+4$  (binary merge)$\}\leq3m+k-2$.
\end{proof}
\begin{theorem}\label{thm:2m2k1}
   $M (m, 2m+2k-1)\leq 3m+k-3$, for $ m\geq 5$ and $k\geq -1$.
\end{theorem}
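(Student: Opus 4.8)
The plan is to mimic the proof of Theorem~\ref{thm:2m2k} almost verbatim, performing a double induction on $k$ and $m$ and feeding the recurrence relation induced by the Modified Binary Merge procedure with the appropriate parameter shifts. Writing $n=2m+2k-1$, the recurrence gives
\[
M(m,2m+2k-1)\leq\max\{M(m,2m+2k-3)+1,\ M(m-1,2m+2k-3)+3,\ M(m-2,2m+2k-1)+3,\ M(m-2,2m+2k-2)+4\}.
\]
The four arguments on the right are, respectively, $(m,2m+2(k-1)-1)$, $(m-1,2(m-1)+2(k-1)-1)$, $(m-2,2(m-2)+2(k+1)-1)$, and $(m-2,2(m-2)+2k)$; the first two are handled by the induction hypothesis for Theorem~\ref{thm:2m2k1}, and the last one is an instance of Theorem~\ref{thm:2m2k} already in hand (since $2m-4+2k=2(m-2)+2k$). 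The third argument, $M(m-2,2m+2k-1)$, should be bounded by binary merge as $M_{bm}(m-2,2(m-2)+2k+3)=3(m-2)+k+1$ for $k$ in the relevant range, exactly as in the proof of Theorem~\ref{thm:2m2k}.

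First I would fix the base cases. The case $k=-1$ is $M(m,2m-1)\leq 3m-2$, which is just the tape merge bound $m+n-1$. For the other end of the induction one needs a small-$m$ base: the case $m=5$ must be pulled from the literature (M\"onting's results on $m=5$ cited as~\cite{5n}, or the table~\cite{table}), analogously to how the $m=3$ case of Theorem~\ref{thm:2m2k} is quoted from Hwang~\cite{hwang1980optimal} and Murphy~\cite{Murphyreport}. With these in place, suppose $m\geq 6$ and $k\geq 0$ and the claim holds for all $(m',k')$ with $m'+k'\leq m+k-1$.

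Then I would simply verify that each of the four terms is at most $3m+k-3$: the first gives $(3m+(k-1)-3)+1=3m+k-3$ by induction; the second gives $(3(m-1)+(k-1)-3)+3=3m+k-3$ by induction; the third gives $M_{bm}(m-2,2(m-2)+2k+3)+3=3(m-2)+k+1+3=3m+k-2$ —- wait, this overshoots by one, so the bound from binary merge must instead be read off more carefully, namely $M(m-2,2m+2k-1)\leq M_{bm}(m-2,2(m-2)+(2k+3))$ and one checks that for the range of $k$ under consideration this equals $3(m-2)+k$, not $3(m-2)+k+1$ (the floor in $M_{bm}$ with an odd offset $2k+3$ contributes $\lfloor(2k+3)/2\rfloor=k+1$, so in fact a direct appeal to Theorem~\ref{thm:2m2k} with shifted parameters, $M(m-2,2(m-2)+2k+3)$, is cleaner if $2k+3$ can be written as an even offset plus a $-1$, i.e. $M(m-2,2(m-2)+2(k+2)-1)\leq 3(m-2)+(k+2)-3=3m+k-7$, which is comfortably below the target); and the fourth gives $M(m-2,2(m-2)+2k)+4\leq(3(m-2)+k-2)+4=3m+k-4$ by Theorem~\ref{thm:2m2k}. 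Taking the maximum yields $3m+k-3$, completing the induction.

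The main obstacle I anticipate is bookkeeping the third branch correctly: one must pick the right previously-established bound for $M(m-2,2m+2k-1)$ so that the $+3$ does not push past $3m+k-3$, and this may force restricting $m\geq 6$ (rather than $5$) or using Theorem~\ref{thm:2m2k1}'s own induction hypothesis on $(m-2,k+2)$ instead of binary merge, paralleling but not identical to the structure of the proof of Theorem~\ref{thm:2m2k}. Everything else is routine substitution into the recurrence.
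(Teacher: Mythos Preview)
Your approach is exactly the paper's: double induction on $m$ and $k$, base cases $k=-1$ (tape merge) and $m=5$ (M\"onting~\cite{5n}), and the four-branch recurrence from the Modified Binary Merge procedure. The only real problem is the third branch, and it is an arithmetic slip rather than a structural obstacle.

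For the third branch you need $M(m-2,2m+2k-1)$. Writing $2m+2k-1=2(m-2)+(2k+3)$ and applying binary merge gives
\[
M_{bm}\bigl(m-2,\,2(m-2)+(2k+3)\bigr)=3(m-2)+\left\lfloor\tfrac{2k+3}{2}\right\rfloor-1=3(m-2)+k,
\]
not $3(m-2)+k+1$; you forgot the trailing $-1$ in the formula $M_{bm}(m',2m'+j)=3m'+\lfloor j/2\rfloor-1$. Adding $3$ hits $3m+k-3$ on the nose, which is precisely what the paper does. Your proposed workaround, invoking the induction hypothesis of Theorem~\ref{thm:2m2k1} on the pair $(m-2,k+2)$, is not legitimate under your own induction order: $(m-2)+(k+2)=m+k\not\leq m+k-1$, so that case has not yet been established. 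Fortunately you do not need it.

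Two smaller bookkeeping slips: the second argument $(m-1,2m+2k-3)$ equals $(m-1,2(m-1)+2k-1)$, i.e.\ parameters $(m-1,k)$, not $(m-1,k-1)$; and the fourth argument $(m-2,2m+2k-2)$ equals $(m-2,2(m-2)+2(k+1))$, i.e.\ parameters $(m-2,k+1)$ for Theorem~\ref{thm:2m2k}, not $(m-2,k)$. With the correct parameters both branches still come to at most $3m+k-3$, so these errors do not break the argument, but they should be cleaned up.
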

\begin{proof}
  We induce on $k$ and $m$. The case for $k=-1$ just follows tape merge algorithm. The case for $m=5$ are given by M{\"o}nting~\cite{5n}.
  Now suppose that $m\geq 6$ and $k\geq0$, and the claim has already been proven for any $(m',k')$ satisfying $m'+k'\leq m+k-1$.
According to the procedure, we have
   \begin{displaymath}
\begin{array}{lll}
M (m, 2m+2k-1)\leq  \max\{&M(m,2m+2 (k-1)-1)+1,\\
& M(m-1,2m+2k-1-2)+3, \\
&M (m-2, 2m+2k-1)+3,\\
& M(m-2,2m+2k-2)+4\}
\end{array}
\end{displaymath}
 $\leq \max\{3m+k-3$ (the induction hypothesis),  $3 (m-1)+k-3+3$ (the induction hypothesis), $3 (m-2)+k+3$ (binary merge), $3 (m-2)+k-1+4$ (Theorem~\ref{thm:2m2k})$\}\leq3m+k-3$.
\end{proof}
\begin{theorem}\label{thm:m2m-2}
   $M (m, 2m-2)\leq 3m-4$, for $m\geq7$.
\end{theorem}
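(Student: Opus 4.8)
The plan is a straightforward induction on $m$ driven by the recurrence
$$M(m,n)\le\max\{M(m,n-2)+1,\ M(m-1,n-2)+3,\ M(m-2,n)+3,\ M(m-2,n-1)+4\}$$
coming from the Modified Binary Merge procedure, specialised to $n=2m-2$. The base case is $m=7$: there $2m-2=12$ and $M(7,12)=17=3\cdot 7-4$, which is exactly the value established by the computer search of Smith and Lang~\cite{game} (and recorded in~\cite{table}). For the inductive step I would assume $m\ge 8$ and that the bound holds for all smaller indices, and then bound the four branches of the recurrence separately with $n=2m-2$.

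For the first branch, since $2m-4<2m$, plain tape merge gives $M(m,2m-4)\le m+(2m-4)-1=3m-5$, so that term is at most $3m-4$. For the second branch, rewrite $2m-4=2(m-1)-2$; the induction hypothesis at $m-1$ (legitimate since $m-1\ge 7$) yields $M(m-1,2m-4)\le 3(m-1)-4=3m-7$, so that term is at most $3m-4$. For the third branch, $2m-2=2(m-2)+2\cdot 1$, so Theorem~\ref{thm:2m2k} with $k=1$ (valid since $m-2\ge 3$) gives $M(m-2,2m-2)\le 3(m-2)+1-2=3m-7$, and that term is at most $3m-4$. For the fourth branch, $2m-3=2(m-2)+2\cdot 1-1$, so Theorem~\ref{thm:2m2k1} with $k=1$ (valid since $m-2\ge 5$) gives $M(m-2,2m-3)\le 3(m-2)+1-3=3m-8$, and that term is at most $3m-4$. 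The maximum of the four is therefore $3m-4$, which completes the step. As a corollary, $\alpha(m)\le 2m-3$ for $m\ge 7$, since $M(m,2m-2)\le 3m-4<3m-3=m+(2m-2)-1$ shows tape merge is not optimal on $(m,2m-2)$.

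The only point requiring care — and the reason the base case must be imported rather than derived — is that the recurrence alone does not close the argument at $m=7$: there the second branch is $M(6,10)+3=15+3=18>17$, because $M(6,10)=15$ is forced by the classical bound $\alpha(6)\ge 10$. Thus the induction genuinely begins at $m=8$, with $m=7$ the sole externally supplied value; everything else is the routine verification that each of the four branches lands at exactly $3m-4$.
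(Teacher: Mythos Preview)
Your proof is correct and follows essentially the same approach as the paper's: induction on $m$ with base case $m=7$ supplied by Smith and Lang~\cite{game}, and the four branches of the Modified Binary Merge recurrence bounded respectively by tape merge, the induction hypothesis, Theorem~\ref{thm:2m2k}, and Theorem~\ref{thm:2m2k1}. Your additional remark explaining why the recurrence fails to close at $m=7$ (the second branch giving $M(6,10)+3=18$) is a nice clarification that the paper omits.
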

\begin{proof}
We induce on $m$. The case for $m=7$ has been verified by Smith and Lang\cite{game}. Now suppose that $m\geq8$ and the claim has already been proven for $m-1$.
According to \emph{S},  we have
   \begin{displaymath}
\begin{array}{lll}
M (m, 2m-2)\leq \max\{&M (m, 2m-4)+1, M (m-1, 2m-4)+3, \\
&M (m-2, 2m-2)+3, M (m-2, 2m-3)+4\}
\end{array}
\end{displaymath}
  $\leq \max\{3m-4$(tape merge), $3 (m-1)-4+3$ (the induction hypothesis),  $3(m-2)-1+3$ (Theorem~\ref{thm:2m2k}), $3(m-2)-2+4$ (Theorem~\ref{thm:2m2k1})$\}\leq3m-4$.
  \end{proof}

\begin{theorem}\label{thm:2m}
   $M (m, 2m)\leq 3m-3$, for $m\geq 10$.
\end{theorem}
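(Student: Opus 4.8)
The plan is to imitate the inductive arguments already used for Theorems~\ref{thm:2m2k}, \ref{thm:2m2k1} and~\ref{thm:m2m-2}: run the Modified Binary Merge procedure once and bound each of its four branches by the strongest estimate currently available. I would induct on $m$, with base case $m=10$, i.e. $M(10,20)\leq 27$; this single instance I would verify by a direct (computer-assisted) search in the spirit of Smith and Lang~\cite{game}, exactly as the base cases $m=7$ of Theorem~\ref{thm:m2m-2} and $m=3,5$ of Theorems~\ref{thm:2m2k}, \ref{thm:2m2k1} are handled. It is worth noting that $m=10$ must be treated as a genuine base case rather than derived from the recurrence: for $m=10$ one of the branches is $M(9,18)+3$, and the bounds available from the earlier theorems (e.g. $M(9,18)\leq 25$ from Theorem~\ref{thm:2m2k} with $k=0$) are not tight enough to close the recurrence, which is precisely why the theorem requires $m\geq 10$.

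For the inductive step, suppose $m\geq 11$ and that the claim holds for $m-1$. Specializing the recurrence induced by the procedure to $n=2m$ gives
\begin{displaymath}
\begin{array}{lll}
M (m, 2m)\leq \max\{&M (m, 2m-2)+1,\ M (m-1, 2m-2)+3,\\
&M (m-2, 2m)+3,\ M (m-2, 2m-1)+4\}.
\end{array}
\end{displaymath}
I would then bound the four terms one by one, in each case using the identity that rewrites the second argument so that an earlier theorem applies directly. The first term is at most $(3m-4)+1 = 3m-3$ by Theorem~\ref{thm:m2m-2} (valid since $m\geq 7$). The second term equals $M(m-1, 2(m-1))+3\leq (3(m-1)-3)+3 = 3m-3$ by the induction hypothesis (valid since $m-1\geq 10$). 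The third term equals $M(m-2, 2(m-2)+4)+3\leq (3(m-2)+2-2)+3 = 3m-3$ by Theorem~\ref{thm:2m2k} with $k=2$ (valid since $m-2\geq 3$). The fourth term equals $M(m-2, 2(m-2)+3)+4\leq (3(m-2)+2-3)+4 = 3m-3$ by Theorem~\ref{thm:2m2k1} with $k=2$ (valid since $m-2\geq 5$). Hence the maximum of the four branches is $3m-3$, which completes the induction.

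I do not anticipate any real mathematical difficulty here: once Theorems~\ref{thm:2m2k}, \ref{thm:2m2k1} and~\ref{thm:m2m-2} are in place, the argument is pure bookkeeping, and the only thing to keep straight is that all four side conditions ($m\geq 7$ for Theorem~\ref{thm:m2m-2}, $m-2\geq 3$ for Theorem~\ref{thm:2m2k}, $m-2\geq 5$ for Theorem~\ref{thm:2m2k1}, and $m-1\geq 10$ for the induction hypothesis) hold throughout the range $m\geq 11$, which they plainly do. The one genuinely external ingredient is the base value $M(10,20)\leq 27$; everything after that is the three-line computation above, and the consequence $\alpha(m)\le 2m-1$ is sharpened accordingly for large $m$.
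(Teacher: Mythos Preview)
Your proof is correct and essentially identical to the paper's: the same induction on $m$ with base case $m=10$ taken from Smith and Lang, the same specialization of the Modified Binary Merge recurrence to $n=2m$, and the same four bounds via Theorem~\ref{thm:m2m-2}, the induction hypothesis, Theorem~\ref{thm:2m2k} with $k=2$, and Theorem~\ref{thm:2m2k1} with $k=2$. Your additional remarks on why $m=10$ must be a genuine base case and the explicit verification of the side conditions are welcome clarifications.
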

\begin{proof}
We do the induction on $m$.
  Smith and Lang\cite{game} have verified the case for $m=10$. Now suppose that $m\geq11$ and the claim has already been proven for $m-1$.
According to the procedure,  we have
   \begin{displaymath}
\begin{array}{lll}
M (m, 2m)\leq \max\{&M (m, 2m-2)+1, M (m-1, 2m-2)+3,\\
&M (m-2, 2m)+3, M (m-2, 2m-1)+4\}
\end{array}
\end{displaymath}
$\leq \max\{3m-3$  (Theorem~\ref{thm:m2m-2}),  $3 (m-1)-3+3$ (the induction hypothesis),  $3 (m-2)+3$ (Theorem~\ref{thm:2m2k}),  $3(m-2)-1+4$ (Theorem~\ref{thm:2m2k1})$\}\leq3m-3$.
  \end{proof}
Finally, we put together the above theorems to get Theorem \ref{thm:upper_bound}.

As we can see in the proofs, if better basic cases can be provided, we can get better upper bounds by using this procedure. However, there is a barrier of this approach: if we want to show $\alpha(m)\leq 2m-k$ or $M(m,2m-k+1)<3m-k$, it's necessary to obtain $\alpha(m-1)\leq 2(m-1)-k$ or $M(m-1,2m-2-k+1)<3m-3-k$ at first, thus it is impossible to show $\alpha(m)\leq2m-\omega(1)$ via this approach.

\section{Conclusion}\label{section:conclusion}
In this paper we improve the lower bounds for $\alpha(m)$ from $\lfloor\frac{3}{2}m\rfloor+1$ to $\lfloor\frac{38}{25}m\rfloor$ via Knuth's adversary methods. We also show that it is impossible to get $\alpha(m)\geq9\lceil m/5\rceil\approx\frac{9}{5}m$ for any $m$ by using this methods.  We then design an algorithm which saves at least one comparison compared to binary merge for $2m-2\leq n\leq3m$. Specially, for the case $M(m,2m-2)$, our algorithm uses one comparison less than tape merge or binary merge, which means we can improve the upper bounds of $\alpha(m)$ by 1.  We wonder whether there exists a universal efficient algorithm to give significantly better upper bounds for $M(m,n)$ in the case $n\leq 2m$, or maybe it's intrinsically hard to compute $M$ functions since there doesn't exist general patterns or underlying structures in the corresponding decision trees.

Besides that, we are also curious about the following conjectures proposed by Knuth~\cite{taocp}:
\begin{conjecture}\label{conj1}
$M(m+1,n+1)\geq2+M(m,n)$.
\end{conjecture}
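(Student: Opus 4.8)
The plan is to try to lift the proof of Lemma~\ref{25383} from the restricted complexity $.M.$ to the unrestricted complexity $M$. I would induct on $m+n$, taking the base cases $m,n\le 10$ from Knuth's table in~\cite{taocp}, and for the inductive step I would fix an arbitrary deterministic algorithm $g$ for the $(m+1,n+1)$ merging problem, let its first comparison be $a_i$ versus $b_j$, and try to produce, for every $(i,j)$, an adversary forcing at least $M(m,n)+2$ comparisons. The one structural feature that survives the passage from $.M.$ to $M$ is that an unrestricted adversary is always \emph{permitted} to commit to a Knuth-style disjunctive split; what is lost is that it is not \emph{obliged} to, and — more dangerously — that the algorithm is no longer obliged to respect such a split.

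Concretely, suppose the first comparison $a_i$ versus $b_j$ can be ``routed across a cut'': the adversary answers $a_i<b_j$ and commits on the spot to the simple split with $A_1=\{a_1,\dots,a_p\}$, $B_1=\{b_1,\dots,b_q\}$ below $A_2$, $B_2$, for some $p\ge i$ and $q<j$. Then the first comparison lies across the cut and contributes nothing to either part, while any algorithm must still spend $\ge M(p,q)$ comparisons inside $(A_1,B_1)$ and $\ge M(m{+}1{-}p,\,n{+}1{-}q)$ inside $(A_2,B_2)$, since across-cut comparisons are pre-answered and useless inside either block. So the step reduces to the superadditivity-type inequality
\[
M(p,q)+M(m{+}1{-}p,\,n{+}1{-}q)\ \ge\ M(m,n)+1 ,
\]
and the natural move — exactly as in Lemma~\ref{25383} — is to take $(p,q)$ and $(s,t):=(m{+}1{-}p,\,n{+}1{-}q)$ to be the two blocks that a best \emph{disjunctive} adversary of the $(m,n)$ game produces after comparing $a_i$ with $b_j$, and then to apply the induction hypothesis to the block that absorbs the two new extreme elements.

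The hard part will be exactly where this move fails, and I expect it to be the genuine obstacle. First, the inequality above is not at our disposal for $M$ the way its analogue is for $.M.$: the quantity $.M.$ is by definition assembled out of such splits, whereas Theorem~\ref{thm:9/5} says that disjunctive adversaries are \emph{strictly} weaker than optimal ones, so there need be no near-optimal disjunctive adversary of the $(m,n)$ game whose split we could copy — an optimal unrestricted adversary leaves the remaining poset connected, with no canonical way to route $a_{m+1}$ and $b_{n+1}$. Second, even granting the inequality, the first comparison $a_1$ versus $b_1$ (and, by reversal, $a_{m+1}$ versus $b_{n+1}$) is already stuck: every split putting $a_1$ and $b_1$ on opposite sides of the cut forces one block to miss one of the two lists entirely, and the best such bound is only $1+M(m,n+1)\ge M(m,n)+1$, one short; complex splits degenerate here as well. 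Getting past this seems to demand either a two-level (or deeper) analysis of $g$'s decision tree, with the adversary's second answer controlled jointly with the first, or a genuinely new handle on the unrestricted $M$ — for instance a monotonicity law such as $M(m+1,n+1)\ge M(m+1,n)+1$ proved directly rather than through disjunctive splits (note that the naive ``declare $b_{n+1}$ maximal'' adversary yields only $M(m+1,n+1)\ge M(m+1,n)$, because $b_{n+1}$'s position can be pinned down transitively through $b_n$, so no extra comparison is actually forced). This is precisely the gap that leaves Conjecture~\ref{conj1} open even though its $.M.$ shadow, Lemma~\ref{25383}, is elementary.
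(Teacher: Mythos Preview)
This statement is listed in the paper as a \emph{conjecture}, not a theorem: the paper offers no proof of it. In the Conclusion the authors explicitly say they are ``curious'' about it, note that it would imply Conjecture~\ref{conj2}, and report that their own attempt --- introducing a hierarchy $.M^{(k)}.$ of adversaries that may delay the disjunctive split by $k$ steps, interpolating between $.M.=.M^{(0)}.$ and $M=.M^{(m+n-2)}.$ --- already stalls at $k=1$.

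Your write-up is therefore not a proof, and you do not pretend otherwise: you correctly sketch the obvious plan (lift Lemma~\ref{25383} from $.M.$ to $M$) and then correctly explain why it collapses. The two obstacles you isolate are the real ones. First, the inductive engine of Lemma~\ref{25383} is the ability to copy the adversary's \emph{optimal} disjunctive split from the $(m,n)$ game; for $M$ there is no such split to copy, and Theorem~\ref{thm:9/5} shows the best disjunctive adversary can be strictly suboptimal, so you cannot recover $M(m,n)$ on the right-hand side. Second, the corner comparison $a_1:b_1$ (and its mirror) resists every one-step split, giving only $M(m,n)+1$. Your remark that the naive ``declare $b_{n+1}$ maximal'' adversary yields only $M(m{+}1,n{+}1)\ge M(m{+}1,n)$, because $b_{n+1}$ can be located transitively through $b_n$, is also on point.

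In short: there is no paper proof to compare against; your analysis of why the natural approach fails is sound and matches the paper's own assessment that the conjecture is open. The one idea in the paper you do not mention is the $.M^{(k)}.$ hierarchy as a possible line of attack.
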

Via a similar proof with Lemma~\ref{thm:25381}, the above conjecture implies the following conjecture which has been mentioned in Section 1.
\begin{conjecture}\label{conj2}
$M(m+1,n)\geq1+M(m,n)\geq M(m,n+1)$, for $m\leq n$.
\end{conjecture}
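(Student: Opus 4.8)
The final statement to prove is Conjecture~\ref{conj2}, namely that $M(m+1,n)\geq 1+M(m,n)\geq M(m,n+1)$ for $m\leq n$, \emph{assuming} Conjecture~\ref{conj1} ($M(m+1,n+1)\geq 2+M(m,n)$). The plan is to mimic exactly the proof of Lemma~\ref{thm:25381} (or Lemma~\ref{25384}), but with the unrestricted function $M$ in place of $/M.$ (resp.\ $.M.$), using Conjecture~\ref{conj1} as the replacement for the "step-by-2" ingredient Lemma~\ref{thm:25381}(a) that drove those earlier arguments.

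First I would set up an induction on $m+n$, with the small cases $m+n\le 10$ handled by the tables of values computed by Knuth~\cite{taocp}. For the inductive step, \emph{Part (b)} ($M(m,n+1)\le 1+M(m,n)$) is the easy direction and does not even need Conjecture~\ref{conj1}: one exhibits an explicit algorithm for $(m,n+1)$ that first compares $a_m$ with $b_{n+1}$ (or does a binary-insertion-style first step when $m=1$), reducing to $(m-1,n+1)$ or $(m,n)$; combined with the induction hypothesis $M(m,n)\ge M(m-1,n)+1\ge M(m-1,n+1)$ this gives $M(m,n+1)\le \max\{M(m-1,n+1),M(m,n)\}+1 = M(m,n)+1$. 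This is the verbatim analogue of the displayed computation in the proof of Lemma~\ref{thm:25381}(b).

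Then for \emph{Part (c)} ($M(m+1,n)\ge 1+M(m,n)$): when $m<n$, chain Conjecture~\ref{conj1} with Part~(b)'s induction hypothesis to get $M(m+1,n)\ge M(m,n-1)+2\ge M(m,n)+1$. When $m=n$, iterate Conjecture~\ref{conj1} down the diagonal, $M(m+1,m)\ge M(m,m-1)+2\ge \cdots \ge M(k+1,k)+2(m-k)$ for a suitable small base case $k$, and compare against $M(m,m)=2m-1$ (Graham--Karp) to conclude $M(m+1,m)\ge M(m,m)+1$; one must check the base case $k$ is large enough that the possible diagonal exceptions (like the $(1,1),(2,2),(3,3)$ exceptions in Lemma~\ref{thm:25381}) have been absorbed into the tabulated small cases. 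This is precisely the structure of the \emph{Part (c)} argument already written out for $/M.$.

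The only genuine subtlety — and hence the main obstacle — is that, unlike $/M.$, the unrestricted $M$ does not come with the full suite of monotonicity and splitting identities that made the earlier inductions self-contained: here the ``$+2$'' ingredient is not a lemma but the \emph{hypothesis} Conjecture~\ref{conj1} itself, and one must be careful that Part~(b) (which is unconditional) and Part~(c) (which uses Conjecture~\ref{conj1}) are interleaved in the induction in a consistent order, and that the boundary cases $m=1$ (where $M(1,n)=\lceil\lg(n+1)\rceil$ is not of the form $m+n-1$) and $m=n$ are each dispatched before they are invoked. No new idea beyond adapting Lemma~\ref{thm:25381} is needed; the content is entirely in verifying that the bookkeeping goes through, which is why the paper states this only as a conjectural consequence rather than a theorem.
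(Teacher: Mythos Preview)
Your proposal is correct and follows exactly the approach the paper has in mind: the paper does not give a proof of Conjecture~\ref{conj2} but merely remarks that Conjecture~\ref{conj1} implies it ``via a similar proof with Lemma~\ref{thm:25381}'', and your sketch is precisely that transcription of the proof of Lemma~\ref{25384}/Lemma~\ref{thm:25381} with $M$ in place of $.M.$ (resp.\ $/M.$) and Conjecture~\ref{conj1} in place of Lemma~\ref{25383} (resp.\ Lemma~\ref{thm:25381}(a)).

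One small correction: your claim that Part~(b) ($M(m,n+1)\le 1+M(m,n)$) ``does not even need Conjecture~\ref{conj1}'' and is ``unconditional'' is not accurate. The step $M(m,n)\ge M(m-1,n)+1\ge M(m-1,n+1)$ that you invoke in Part~(b) is the full induction hypothesis for the pair $(m-1,n)$, and its first inequality is Part~(c), which \emph{does} rely on Conjecture~\ref{conj1}. So both halves of Conjecture~\ref{conj2} genuinely depend on Conjecture~\ref{conj1} through the interleaved induction; neither is known unconditionally --- this is precisely why Conjecture~\ref{conj2} remains open. You essentially acknowledge this later when you speak of the interleaving, but the earlier ``unconditional'' wording should be dropped. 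Also, for the diagonal case $m=n$ there are no exceptional small values for $M$ (unlike the $(1,1),(2,2),(3,3)$ exceptions for $/M.$): iterating Conjecture~\ref{conj1} down to $M(2,1)=2$ gives $M(m+1,m)\ge 2m = M(m,m)+1$ directly, just as in the proof of Lemma~\ref{25384}.
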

In the attempt to prove these two conjectures,  we introduced the notation $.M^{(k)}.(m,n)$. Roughly speaking, $.M^k.(m,n)$ is the adversary which can delay $k$ steps to give the splitting strategy, and $.M.(m,n)=.M^0.(m,n)\leq.M^1.(m,n)\cdots\leq.M^{m+n-2}.(m,n)=M(m,n)$. In the case $k=0$, it is exactly Lemma~\ref{25383}, but it seems much harder even for $k=1$.

\bibliographystyle{plain}
\bibliography{merge_sort}

\end{document}